\newif\if@restonecol
\newtheorem{lemma}{Lemma}
\newtheorem{theorem}{Theorem}
\newtheorem{comment}{Comment}[section]
\newcommand{\Tr}{\mathsf{Tr}}
\newcommand{\Real}{\mathsf{Re}}
\begin{document}

\title{\huge Joint Transceiver Design for Wireless Sensor Networks through Block Coordinate Descent Optimization}

\author{Yang Liu$^{\dagger}$,\ \ Jing Li$^{\dagger}$,\ \ Xuanxuan Lu$^{\dagger}$, \ \  and \ \ Chau Yuen$^{\ddagger}$ \\
$^{\dagger}$ Electrical and Computer Engineering Department,
 Lehigh University, Bethlehem, PA 18015, USA\\
$^{\ddagger}$ Singapore University of Technology and Design, 20 Dover Drive, 138682, Singapore \\
Email: \{yal210@lehigh.edu, jingli@ece.lehigh.edu, xul311@lehigh.edu, yuenchau@sutd.edu.sg\}
\vspace*{-0.5cm}
\thanks{Supported by National Science Foundation under Grants No.0928092, 1133027 and 1343372.}}


\maketitle


\begin{abstract}

This paper considers the joint transceiver design in a wireless sensor network where multiple sensors observe the same physical event and transmit their contaminated observations to a fusion center, with all nodes equipped with multiple antennae and linear filters. Under the mean square error (MSE) criterion, the joint beamforming design problem can be formulated as a nonconvex optimization problem. To attack this problem, various block coordinate descent (BCD) algorithms are proposed with convergence being carefully examined. First we propose a two block coordinate descent (2-BCD) algorithm that iteratively designs all the beamformers and the linear receiver, where both subproblems are convex and the convergence of limit points to stationary points is guaranteed. Besides, the thorough solution to optimizing one single beamformer is given, which, although discussed several times, is usually incomplete in existing literature. Based on that, multiple block coordinate descent algorithms are proposed. Solving the joint beamformers' design by cyclically updating each separate beamformer under the 2-BCD framework gives birth to a layered BCD algorithm, which guarantees convergence to stationary points. Besides that, a wide class of multiple BCD algorithms using the general essentially cyclic updating rule has been studied. As will be seen, by appropriately adjusting the update of single beamformer, fast converging, highly efficient and stationary point achieving algorithms can be obtained. Extensive numerical results are presented to verify our findings.

\end{abstract}



\section{Introduction}
\label{sec:introduction}

Consider a typical wireless sensor network (WSN) comprised of a fusion center (FC)  and numerous sensors that are spatially distributed and wirelessly connected to provide surveillance to the same physical event. After harvesting information from the environment, these sensors transmit distorted observations to the fusion center (FC) to perform data fusion. A central underlying problem is how to design the sensors and the fusion center to collaboratively accomplish sensing, communication and fusion task in an efficient and trust-worthy manner. 

When the sensors and the fusion center are all equipped with multiple antennas and linear filters, this problem may be regarded as one of the cooperative multi-input multi-output (MIMO) beamforming design problems, which have been tackled from various perspectives \cite{bib:Sensor_compress_1, bib:Sensor_compress_2, bib:FangLi_2, bib:Sensor_compress_3, bib:sensor_network_Cui, bib:sensor_network_Hamid, bib:JunFang_MI, bib:J2_Yang_to_be_submitted, bib:J3_Yang_to_be_submitted}. For example \cite{bib:Sensor_compress_1, bib:Sensor_compress_2,  bib:Sensor_compress_3, bib:FangLi_2} target compression (dimensionality reduction) beamforming. \cite{bib:Sensor_compress_1} and \cite{bib:Sensor_compress_2} consider the scenarios where the orthogonal multiple access channels (MAC) between the sensors and the fusion center are perfect without fading or noise. For wireless communication, the assumption of ideal channel is unrealistic and the imperfect channels are considered in \cite{bib:FangLi_2, bib:Sensor_compress_3, bib:sensor_network_Cui, bib:sensor_network_Hamid, bib:JunFang_MI, bib:J2_Yang_to_be_submitted, bib:J3_Yang_to_be_submitted}. \cite{bib:FangLi_2} researches the problem of scalar source transmission with all sensors sharing one total transmission power and using orthogonal MAC. Imperfect coherent MAC and separate power constraint for each sensor are considered in \cite{bib:Sensor_compress_3}, under the assumptions that all channel matrices are square and nonsingular. The work \cite{bib:sensor_network_Cui} and \cite{bib:sensor_network_Hamid} are particularly relevant to our problem. \cite{bib:sensor_network_Cui} is the first to present a very general system model, which considers noisy and fading channels, separate power constraints and does not impose any constraints on the dimensions of beamformers or channel matrices. \cite{bib:sensor_network_Cui} provides the solutions to several interesting special cases of the general model for coherent MAC, such as the noiseless channel case and the no-intersymbol-interference (no-ISI) channel case. In \cite{bib:sensor_network_Hamid}, the authors develop a useful type of iterative method that is applicable to the general model in \cite{bib:sensor_network_Cui} for coherent MAC. 
All the works mentioned above take the mean square error (MSE) as performance metric. Recently, under the similar system settings of \cite{bib:sensor_network_Cui}, joint transceiver design to maximize mutual information(MI) attract attentions and are studied in \cite{bib:JunFang_MI} and \cite{bib:J2_Yang_to_be_submitted}, with orthogonal and coherent MAC being considered respectively. The SNR maximization problem for wireless sensor network with coherent MAC is reported in \cite{bib:J3_Yang_to_be_submitted}.

It is interesting to note that the beamforming design problems in MIMO multi-sensor decision-fusion system have significant relevance with those in other multi-agent communication networks, e.g. MIMO multi-relay and multiuser communication systems. A large number of exciting papers exist in the literature, see, for example, \cite{bib:MIMO_relay_unifying, bib:MIMO_multiuser, bib:MIMO_relay_multiuser, bib:WMMSE_QingjiangShi} and the references therein.


This paper considers the very general coherent MAC model discussed in \cite{bib:sensor_network_Cui,bib:sensor_network_Hamid}. To solve the original nonconvex joint beamforming problem, we propose several iterative optimization algorithms using the block coordinate descent (BCD) methodology, with their convergence and complexity carefully studied. Specifically our contributions include:

1) We first propose a 2 block coordinate descent (2-BCD) method that decomposes the original problem into two subproblems--- one subproblem, with all the beamformers given, is a linear minimum mean square error (LMMSE) filtering problem  and the other one, jointly optimizing the beamformers with the receiver given, is shown to be convex. It is worth mentioning that \cite{bib:sensor_network_Cui} considers the special case where the sensor-FC channels are intersymbol-interference (ISI) free (i.e. the sensor-FC channel matrix is an identity matrix) and solves the entire problem by semidefinite programming(SDP) and relaxation. Here we reformulate the joint optimization of beamformers, even with arbitrary sensor-FC channel matrices, into a second-order cone programming(SOCP) problem, which is more efficiently solvable than the general SDP problem. Convergence analysis shows that this 2-BCD algorithm guarantees its limit points to be stationary points of the original problem. Interestingly enough, although not presented in this article, the proposed 2-BCD algorithm has one more fold of importance---the convexity of its subproblem jointly optimizing beamformers can be taken advantage of by the multiplier method \cite{bib:Distributed_Palomar}, which requires the original problem to be convex, and therefore gives birth to decentralized solutions to the problem under the 2-BCD framework. 

2) We have also attacked the MSE minimization with respect to one single beamformer and developed fully analytical solutions (possibly up to a simple one-dimension bisection search). 
It should be pointed out that, although the same problem has been studied in several previous papers (e.g. \cite{bib:MIMO_multiuser, bib:MIMO_relay_multiuser, bib:sensor_network_Hamid, bib:WMMSE_QingjiangShi}), we are able to carry out the analysis to the very end and thoroughly solved the problem by clearly describing the solution structure and deriving the solutions for all possible cases. Specifically,  we explicitly obtain the conditions for judging the positiveness of the Lagrange multiplier. Moreover, in the zero-Lagrange-multiplier case with singular quadratic matrix, we give out the energy-preserving solution via pseudoinverse among all possible optimal solutions. To the best of our knowledge, these exact results have never been discussed in existing literature.

3) Our closed form solution for one single beamformer's update paves the way to multiple block coordinate descent algorithms. A layered-BCD algorithm is proposed, where an inner-loop cyclically optimizing each separate beamformer is embedded in the 2-BCD framework. This layered-BCD algorithm is shown to guarantee the limit points of its solution sequence to be stationary. Besides we also consider a wide class of multiple block coordinate descent algorithms with the very general essentially cyclic updating rule. It is interesting to note that this class of algorithms subsumes the one proposed in \cite{bib:sensor_network_Hamid} as a specialized realization. Furthermore, as will be shown, by appropriately adjusting the update of each single beamformer to a proximal version and introducing approximation, the essentially cyclic multiple block coordinate descent algorithm exhibits fast converging rate, guarantees convergence to stationary points and achieves high computation efficiency.

The rest of the paper is organized as follows: Section \ref{sec:system model} introduces the system model of the joint beamforming problem in the MIMO wireless sensor network. Section \ref{sec:two_block} discusses the 2-BCD beamforming design approach and analyzes its convexity and convergence. Section \ref{sec:further_decoupling} discusses the further decomposition of the joint optimization of beamformers, including the closed form solution to one separate beamformer's update, layered BCD algorithms, essentially cyclic BCD algorithms and their variants and convergence. Section \ref{sec:numerical results} provides simulation verification and Section \ref{sec:conclusion} concludes this article.

\emph{Notations}: We use bold lowercase letters to denote complex vectors and bold capital letters to denote complex matrices. $\mathbf{0}$, $\mathbf{O}_{m\times n}$, and $\mathbf{I}_m$ are used to denote zero vectors, zero matrices of dimension $m\times n$, and identity matrices of order $m$ respectively. $\mathbf{A}^T$, $\mathbf{A}^{\ast}$ and $\mathbf{A}^H$ are used to denote transpose, conjugate and conjugate transpose (Hermitian transpose) respectively of an arbitrary complex matrix $\mathbf{A}$. $\Tr\{\cdot\}$ denotes the trace operation of a square matrix. $|\cdot|$ denotes the modulus of a complex scalar, and $\|\cdot\|_2$ denotes the $l_2$-norm of a complex vector. $vec(\cdot)$ means vectorization operation of a matrix, which is performed by packing the columns of a matrix into a long one column. $\otimes$ denotes the Kronecker product. $\mathsf{diag}\{\mathbf{A}_1,\cdots,\mathbf{A}_n\}$ denotes the block diagonal matrix with its $i$-th diagonal block being the square complex matrix $\mathbf{A}_i$, $i\in\{1,\cdots,n\}$. $\mathsf{Re}\{x\}$ denotes the real part of a complex value $x$.

\section{System Model}
\label{sec:system model}

Consider a centralized wireless sensor network with $L$ sensors and one fusion center where all the nodes are equipped with multiple antennae, as shown in Figure \ref{fig:sysmodel}. Let $M$ and $N_i$ ($i=1,2,\cdots,L$) be the number of antennas provisioned to the fusion center and the $i$-th sensor respectively. Denote $\mathbf{s}$ as the common source vector observed by all sensors. The source $\mathbf{s}$ is a complex vector of dimension $K$, i.e. $\mathbf{s}\in\mathbb{C}^{K\times1}$, and is observed by all the sensors. At the $i$-th sensor, the source signal is linearly transformed by an observation matrix $\mathbf{K}_i\in\mathbb{C}^{J_i\times K}$ and corrupted by additive observation noise $\mathbf{n}_i$, which has zero mean and covariance matrix $\mathbf{\Sigma}_i$.   

\begin{figure}[htb]
\centerline{
\includegraphics[width=0.48\textwidth]{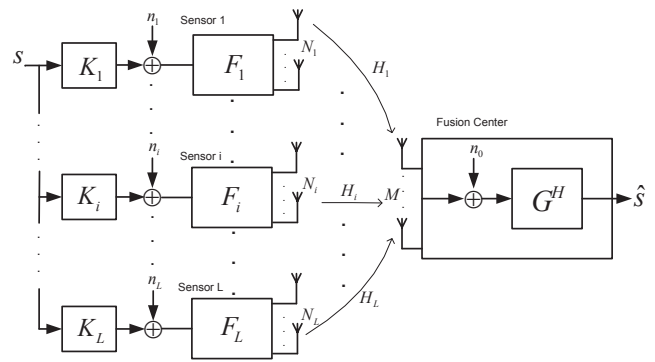}
}
\caption{Multi-Sensor System Model}
\label{fig:sysmodel}
\end{figure}

Each sensor applies some linear precoder, $\mathbf{F}_i\in\mathbb{C}^{N_i\times J_i}$, to its observation $(\mathbf{K}_i\mathbf{s}+\mathbf{n}_i)$ before sending it to the common fusion center. Denote $\mathbf{H}_i\in\mathbb{C}^{M\times N_i}$ as the fading channel between the $i$-th sensor and the fusion center. Here we considers the coherent MAC model, where the transmitted data is superimposed and corrupted by additive noise at the fusion center. Without loss of generality, the channel noise is modeled as a vector $\mathbf{n}_0\in\mathbb{C}^{M\times1}$ with zero mean and white covariance $\sigma_0^2\mathbf{I}_{M}$. The fusion center, after collecting all the results, applies a linear postcoder, $\mathbf{G}^H\in\mathbb{C}^{K\times M}$, to retrieve the original source $\mathbf{s}$.


This system model depicted in Figure \ref{fig:sysmodel} is the same as the general model presented in \cite{bib:sensor_network_Cui,bib:sensor_network_Hamid}. Following their convention, we assume that the system is perfectly time-synchronous (which may be realized via the GPS system) and that all the channel state information $\mathbf{H}_i$ is known (which may be achieved via channel estimation techniques). Since the sensors and the fusion center are usually distributed over a wide range of space, it is reasonable to assume that the noise $\mathbf{n}_i$ at different sensors and $\mathbf{n}_0$ at the fusion center are mutually uncorrelated.

The signal transmitted by the $i$-th sensor takes the form of $\mathbf{F}_i(\mathbf{K}_i\mathbf{s}+\mathbf{n}_i)$. The output $\hat{\mathbf{s}}$ of the postcoder at the fusion center is given as
\begin{align}
\!\!\!\!\hat{\mathbf{s}}&=\mathbf{G}^H\mathbf{r}=\mathbf{G}^H\bigg(\sum_{i=1}^L\mathbf{H}_i\mathbf{F}_i(\mathbf{K}_i\mathbf{s}+\mathbf{n}_i)+\mathbf{n}_0\bigg) \\
\!\!\!\!&=\mathbf{G}^H\bigg(\sum_{i=1}^L\mathbf{H}_i\mathbf{F}_i\mathbf{K}_i\bigg)\mathbf{s}+\mathbf{G}^H\bigg(\underbrace{\sum_{i=1}^L\mathbf{H}_i\mathbf{F}_i\mathbf{n}_i+\mathbf{n}_0}_{\mathbf{n}} \bigg),\label{eq:sysmodel}
\end{align}
where the compound noise vector $\mathbf{n}$ has covariance matrix $\mathbf{\Sigma}_{\mathbf{n}}$ given by 
\begin{align}
\mathbf{\Sigma}_{\mathbf{n}}=\sigma_0^2\mathbf{I}_M+ \sum_{i=1}^L\mathbf{H}_i\mathbf{F}_i\mathbf{\Sigma}_i\mathbf{F}_i^H\mathbf{H}_i^H. 
\label{eq:cov_n0}
\end{align}


In this paper, we take the mean square error as a figure of merit. The mean square error matrix $\mathbf{\Phi}$ is defined as
\begin{align}
\mathbf{\Phi}&\triangleq\mathsf{E}\big\{\big(\mathbf{s}-\hat{\mathbf{s}}\big)\big(\mathbf{s}-\hat{\mathbf{s}}\big)^H\big\}. \label{eq:MSE_phi_definition}
\end{align}
Assume that the source signal $\mathbf{s}$ has zero mean and a covariance matrix $\mathbf{\Sigma}_{\mathbf{s}}\triangleq\mathsf{E}\{\mathbf{s}\mathbf{s}^H\}$. By plugging (\ref{eq:sysmodel}) into (\ref{eq:MSE_phi_definition}), we can express the $\mathsf{MSE}$ matrix $\mathbf{\Phi}$ as a function of $\{\mathbf{F}_i\}$ and $\mathbf{G}$ as:
\begin{align}
\!\!\!\!\!\!\!\mathbf{\Phi}\Big(\{\mathbf{F}_i\}_{i=1}^{L},\mathbf{G}\Big)\!\!&=\!\!\mathbf{G}^H\Big(\!\sum_{i=1}^L\mathbf{H}_i\mathbf{F}_i\mathbf{K}_i\!\Big)\mathbf{\Sigma}_{\mathbf{s}}\Big(\!\sum_{i=1}^L\mathbf{H}_i\mathbf{F}_i\mathbf{K}_i\!\Big)^H\mathbf{G} \nonumber\\
&\!\!\!\!\!\!-\!\!\mathbf{G}^H\!\Big(\!\sum_{i=1}^L\mathbf{H}_i\mathbf{F}_i\mathbf{K}_i\!\Big)\mathbf{\Sigma}_{\mathbf{s}}\!\!-\!\!\mathbf{\Sigma}_{\mathbf{s}}\Big(\!\sum_{i=1}^L\mathbf{H}_i\mathbf{F}_i\mathbf{K}_i\!\Big)^H\!\!\mathbf{G}\!\!\nonumber\\
&\!\!\!\!\!\!\!\!+\!\!\sum_{i=1}^L\!\mathbf{G}^H\mathbf{H}_i\mathbf{F}_i\mathbf{\Sigma}_{i}\mathbf{F}_i^H\mathbf{H}_i^H\mathbf{G}\!+\!\sigma_0^2\mathbf{G}^H\mathbf{G}\!+\!\!\mathbf{\Sigma}_{\mathbf{s}}.\label{eq:MSE_matrix}
\end{align}
The total $\mathsf{MSE}$ is then given by
\begin{align}
\mathsf{MSE}\Big(\big\{\mathbf{F}_i\big\}_{i=1}^{L},\mathbf{G}\Big)\triangleq \mathsf{Tr}\Big\{\mathbf{\Phi}\Big(\big\{\mathbf{F}_i\big\}_{i=1}^{L},\mathbf{G}\Big)\Big\}\label{eq:MSE_def}.
\end{align}

We consider the case where each sensor has its own transmission power constraint.
This means $\mathsf{E}\{\|\mathbf{F}_i(\mathbf{K}_i\mathbf{s}+\mathbf{n}_i)\|_2^2\}=\mathsf{Tr}\{\mathbf{F}_i(\mathbf{K}_i\mathbf{\Sigma_{\mathbf{s}}}\mathbf{K}_i^H\!\!+\!\!\mathbf{\Sigma}_{i})\mathbf{F}_i^H\}\leq P_i$. 
The overall beamforming design problem can then be formulated as the following optimization problem:
\begin{subequations}
\begin{align}
&\!\!\!\!\!(\mathsf{P}0)\!:\!\!\!\!\underset{\{\mathbf{F}_i\}_{i=1}^{L},\mathbf{G}}{\min.}\!\!\mathsf{MSE}\big(\{\mathbf{F}_i\}_{i=1}^{L},\mathbf{G}\big), \label{eq:opt_prob0}\\
&\!\!s.t.\ \mathsf{Tr}\big\{\mathbf{F}_i(\mathbf{K}_i\mathbf{\Sigma_{\mathbf{s}}}\mathbf{K}_i^H\!\!+\!\!\mathbf{\Sigma}_{i})\mathbf{F}_i^H\big\}\!\leq\!P_i,\ i\in\{1,\cdots, L\}. \label{eq:opt_contr0}
\end{align}
\end{subequations}
The above problem is nonconvex, which can be verified by checking the special case where $\{\mathbf{F}_i\}_{i=1}^L$ and $\mathbf{G}$ are all scalars. 

The following of this paper consults to \emph{block coordinate descent} (BCD) method \cite{bib:parallel_computing_Bertsekas, bib:nonlinear_programming_Bertsekas, bib:Gauss-Seidel, bib:block_coordinate_decent_algorithm}, which is also known as Gauss-Seidel method, to solve ($\mathsf{P}0$) by partitioning the whole variables into separate groups and optimize each group (with the others being fixed) in an iterative manner. Appropriate decomposition can lead to efficiently solvable subproblems and may also provide opportunities for parallel computation.


\section{Two-Block Coordinate Descent (2-BCD)}
\label{sec:two_block}

In this section, we study a two block coordinate descent (2-BCD) method that decouples the design of the postcoder $\mathbf{G}$ (conditioned on the precoders), thereafter referred to as ($\mathsf{P}1$), from the design of all the precoders $\{\mathbf{F}_i\}_{i=1}^L$ (conditioned on the postcoder), thereafter referred to as ($\mathsf{P}2$).  

\subsection{($\mathit{P1}$): Optimizing $\mathbf{G}$ given $\{\mathbf{F}_i\}$}

For any given $\{\mathbf{F}_i\}_{i=1}^L$, minimizing $\mathsf{MSE}$ with respective to $\mathbf{G}$ becomes a strictly convex non-constrained quadratic problem ($\mathsf{P}1$):
\begin{align}
\!\!\!\!(\mathsf{P}1):\underset{\mathbf{G}}{\min}\ \mathsf{Tr}\Big\{\mathbf{\Phi}\Big(\mathbf{G} \Big|\big\{\mathbf{F}_i\big\}_{i=1}^L\Big)\Big\}.\label{eq:opt_prob1}
\end{align}
By equating the derivative $\frac{\partial}{\partial\mathbf{G}^*}\mathsf{MSE}\big(\mathbf{G}\big)$ with zero, the optimal receiver is readily obtained as the well-known Wiener filter \cite{bib:AdaptiveFilter} 
\begin{align}
\!\!\!\!\!\!\mathbf{G}_{(\mathsf{P}1)}^{\star}\!\!=\!\!\bigg[\Big(\!\sum_{i=1}^{L}\!\mathbf{H}_i\mathbf{F}_i\mathbf{K}_i\!\Big)\mathbf{\Sigma}_{\mathbf{s}}\!\Big(\!\sum_{i=1}^{L}\!\mathbf{H}_i\mathbf{F}_i\mathbf{K}_i\!\Big)^{\!\!H}\!\!\!\!+\!\!\mathbf{\Sigma}_{\mathbf{n}}\!\bigg]^{\!-\!1}\!\!\!\!\Big(\!\sum_{i=1}^{L}\!\mathbf{H}_i\mathbf{F}_i\mathbf{K}_i\!\Big)\mathbf{\Sigma}_{\mathbf{s}},\label{eq:G_MMSE}
\end{align}
where $\mathbf{\Sigma}_{\mathbf{n}}$ is given in (\ref{eq:cov_n0}).

\subsection{($\mathit{P2}$): Optimizing $\{\mathbf{F}_i\}$ given $\mathbf{G}$}

With $\mathbf{G}$ being fixed, the subproblem ($\mathsf{P}2$) minimizes $\mathsf{MSE}$ with respect to $\{\mathbf{F}_i\}_{i=1}^L$ is formulated as
\begin{subequations}
\begin{align}
&\!\!\!\!\!\!\!\!\!\!(\mathsf{P}2):\!\underset{\{\mathbf{F}_i\}_{i=1}^{L}}{\min.} \mathsf{Tr}\Big\{\mathbf{\Phi}\Big(\big\{\mathbf{F}_i\big\}_{i=1}^L\Big|\mathbf{G}\Big)\Big\}, \label{eq:opt_prob2} \\
&\!\!s.t.\ \mathsf{Tr}\big\{\mathbf{F}_i(\mathbf{K}_i\mathbf{\Sigma}_{\mathbf{s}}\mathbf{K}_i^H\!\!+\!\!\mathbf{\Sigma}_i)\mathbf{F}_i^H\big\}\!\leq\! P_i, \  i\!\in\!\{1,\cdots, L\}. \label{eq:opt_contr2}
\end{align}
\label{eq:opt_prob_P2}
\end{subequations}
Below we discuss the convexity of ($\mathsf{P}2$). 

\begin{theorem}\label{thm:P_2_convexity}
($\mathit{P2}$) is convex with respect to $\{\mathbf{F}_i\}_{i=1}^{L}$.
\end{theorem}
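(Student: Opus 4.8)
The plan is to establish convexity by showing separately that the objective $\mathsf{Tr}\{\mathbf{\Phi}(\{\mathbf{F}_i\}_{i=1}^L|\mathbf{G})\}$ in \eqref{eq:opt_prob2} is a convex function of the collection $\{\mathbf{F}_i\}_{i=1}^L$ and that the feasible set carved out by the power constraints \eqref{eq:opt_contr2} is convex. Rather than differentiate twice and inspect a Hessian, I would exploit a single structural fact: every term in the MSE expansion \eqref{eq:MSE_matrix} is either affine in $\{\mathbf{F}_i\}$ or equal to a squared Frobenius norm of a matrix that depends affinely (indeed linearly) on $\{\mathbf{F}_i\}$. Since a squared norm is convex and convexity is preserved under precomposition with an affine map, each such term is convex, and a sum of convex and affine terms is convex.

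First I would sort the terms of \eqref{eq:MSE_matrix} by their degree in $\{\mathbf{F}_i\}$. Writing the aggregate map $\mathbf{T}(\{\mathbf{F}_i\}_{i=1}^L)\triangleq\sum_{i=1}^L\mathbf{H}_i\mathbf{F}_i\mathbf{K}_i$, which is linear in the whole collection, the two cross terms $-\mathsf{Tr}\{\mathbf{G}^H\mathbf{T}\mathbf{\Sigma}_{\mathbf{s}}\}$ and $-\mathsf{Tr}\{\mathbf{\Sigma}_{\mathbf{s}}\mathbf{T}^H\mathbf{G}\}$ are affine, while $\sigma_0^2\mathsf{Tr}\{\mathbf{G}^H\mathbf{G}\}+\mathsf{Tr}\{\mathbf{\Sigma}_{\mathbf{s}}\}$ is constant; none of these can affect convexity. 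The work therefore reduces to the two genuinely quadratic pieces.

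For the piece $\mathsf{Tr}\{\mathbf{G}^H\mathbf{T}\mathbf{\Sigma}_{\mathbf{s}}\mathbf{T}^H\mathbf{G}\}$ I would use that the source covariance $\mathbf{\Sigma}_{\mathbf{s}}$ is Hermitian positive semidefinite and hence possesses a Hermitian square root $\mathbf{\Sigma}_{\mathbf{s}}^{1/2}$; by the cyclic property of the trace this term equals $\|\mathbf{G}^H\mathbf{T}\mathbf{\Sigma}_{\mathbf{s}}^{1/2}\|_F^2$, the squared norm of a linear image of $\{\mathbf{F}_i\}$, and thus convex. The key point is that, although this term couples all the beamformers through the double sum hidden in $\mathbf{T}\mathbf{\Sigma}_{\mathbf{s}}\mathbf{T}^H$, the Frobenius-norm form captures the full coupling in one stroke because $\mathbf{T}$ is jointly linear in the entire collection; this is what secures joint convexity rather than mere blockwise convexity. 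Exactly the same factorization, now invoking the positive semidefiniteness of each observation-noise covariance $\mathbf{\Sigma}_i$, rewrites the remaining quadratic piece $\sum_{i=1}^L\mathsf{Tr}\{\mathbf{G}^H\mathbf{H}_i\mathbf{F}_i\mathbf{\Sigma}_i\mathbf{F}_i^H\mathbf{H}_i^H\mathbf{G}\}$ as $\sum_{i=1}^L\|\mathbf{G}^H\mathbf{H}_i\mathbf{F}_i\mathbf{\Sigma}_i^{1/2}\|_F^2$, a sum of convex terms.

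It remains to treat the constraints \eqref{eq:opt_contr2}. Since $\mathbf{K}_i\mathbf{\Sigma}_{\mathbf{s}}\mathbf{K}_i^H+\mathbf{\Sigma}_i$ is a sum of two Hermitian positive-semidefinite matrices, it is itself positive semidefinite and admits a Hermitian square root, so the left-hand side equals $\|\mathbf{F}_i(\mathbf{K}_i\mathbf{\Sigma}_{\mathbf{s}}\mathbf{K}_i^H+\mathbf{\Sigma}_i)^{1/2}\|_F^2$; each constraint is therefore a sublevel set of a convex function and the feasible region is convex. Collecting these facts shows $(\mathsf{P}2)$ to be a convex program. I do not expect a genuine obstacle here; the only points requiring care are verifying the positive semidefiniteness of the covariance and Gram matrices (which is precisely what licenses the Hermitian square roots) and ensuring the double-sum quadratic term is recognized as jointly convex in all $\{\mathbf{F}_i\}$ and not merely convex in each block separately.
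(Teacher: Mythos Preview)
Your proposal is correct and reaches the same conclusion as the paper, but by a genuinely different route. The paper's proof vectorizes each quadratic term via the identity $\mathsf{Tr}\{\mathbf{ABCD}\}=vec^T(\mathbf{D}^T)[\mathbf{C}^T\otimes\mathbf{A}]vec(\mathbf{B})$, obtaining an explicit quadratic form $vec^H(\mathbf{X})[\mathbf{\Sigma}^*\otimes(\mathbf{A}\mathbf{A}^H)]vec(\mathbf{X})$, and then invokes properties of Kronecker products (Hermitian factors give a Hermitian product, eigenvalues multiply) to certify that the form matrix is positive semidefinite; convexity of the coupled term then follows by composing with the affine map $\{\mathbf{F}_i\}\mapsto\sum_i\mathbf{H}_i\mathbf{F}_i\mathbf{K}_i$. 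Your argument bypasses the Kronecker machinery entirely: you factor each quadratic trace as a squared Frobenius norm $\|\mathbf{G}^H\mathbf{T}\mathbf{\Sigma}_{\mathbf{s}}^{1/2}\|_F^2$ using Hermitian square roots of the PSD covariance matrices, and then appeal only to the convexity of $\|\cdot\|_F^2$ under affine precomposition. Your route is shorter and more elementary, requiring no Kronecker eigenvalue facts; the paper's route, on the other hand, explicitly identifies the Hessian-type matrices $[\mathbf{\Sigma}^*\otimes(\mathbf{A}\mathbf{A}^H)]$, which dovetails with the vectorized notation $\mathbf{A}_{ij},\mathbf{B}_i,\mathbf{C}_i,\mathbf{D}_i$ used immediately afterward to build the SOCP reformulation. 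Both arguments correctly capture \emph{joint} convexity in all $\{\mathbf{F}_i\}$ through the linearity of the aggregate map $\mathbf{T}$, which you rightly flag as the essential point.
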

\begin{proof}
First consider the function $f\big(\mathbf{X}\big):\mathbb{C}^{m\times n}\mapsto\mathbb{R}, f(\mathbf{X})\!=\!\mathsf{Tr}\{\mathbf{A}^H\mathbf{X}\mathbf{\Sigma}\mathbf{X}^H\mathbf{A}\}$, 
where the constant matrices $\mathbf{A}$ and $\mathbf{\Sigma}$ have appropriate dimensions and $\mathbf{\Sigma}$ is Hermitian and positive semidefinite.

By the identities $\mathsf{Tr}\{\mathbf{AB}\}=\mathsf{Tr}\{\mathbf{BA}\}$ and $\mathsf{Tr}\{\mathbf{ABCD}\}=vec^T(\mathbf{D}^T)\big[\mathbf{C}^T\otimes\mathbf{A}\big]vec(\mathbf{B})$, $f\big(\mathbf{X}\big)$ can be equivalently written as $f(\mathbf{X})=vec^H(\mathbf{X})[\mathbf{\Sigma}^*\otimes(\mathbf{A}\mathbf{A}^H)]vec(\mathbf{X})$. 

According to \cite{bib:KroneckerEigenvalue}, i) $[\mathbf{A}\!\otimes\!\mathbf{B}]^H\!=\!\mathbf{A}^H\!\otimes\!\mathbf{B}^H$; ii) for any two Hermitian matrices $\mathbf{A}_{m\times m}$ and $\mathbf{B}_{n\times n}$ having eigenvalues $\{\lambda_{i}(\mathbf{A})\}_{i=1}^{m}$ and $\{\lambda_{j}(\mathbf{B})\}_{j=1}^{n}$ respectively, the eigenvalues of their Kronecker product $\mathbf{A}\!\otimes\!\mathbf{B}$ are given by $\{\lambda_{i}(\mathbf{A})\lambda_{j}(\mathbf{B})\}_{i=1,j=1}^{m,n}$. As a result, $\mathbf{A}\otimes\mathbf{B}$ is positive semidefinite when $\mathbf{A}$ and $\mathbf{B}$ are positive semidefinite. 
 
Since $\mathbf{A}\mathbf{A}^H$ and $\mathbf{\Sigma}^*$ are both positive semidefinite, $[\mathbf{\Sigma}^*\otimes(\mathbf{A}\mathbf{A}^H)]$ is positive semidefinite and therefore $f(\mathbf{X})$ is actually a convex homogeneous quadratic function of $vec\big(\mathbf{X}\big)$.

Now substitute $\mathbf{X}$ in $f(\mathbf{X})$ by $\sum_{i=1}^L\big(\mathbf{H}_i\mathbf{F}_i\mathbf{K}_i\big)$ 
and recall the fact that affine operation preserves convexity \cite{bib:CvxOpt}, the term  $\Tr\Big\{\mathbf{G}^H\big(\!\sum_{i=1}^L\mathbf{H}_i\mathbf{F}_i\mathbf{K}_i\big)\mathbf{\Sigma}_{\mathbf{s}}\big(\!\sum_{i=1}^L\mathbf{H}_i\mathbf{F}_i\mathbf{K}_i\big)^H\mathbf{G}\Big\}$ in the objective function ($\mathsf{P}2$) is therefore convex with respect to $\{\mathbf{F}_i\}_{i=1}^{L}$. By the same reasoning, the remaining terms in the objective and the constraints of ($\mathsf{P}2$) are either convex quadratic or affine functions of $\{\mathbf{F}_i\}_{i=1}^{L}$ and therefore the problem $(\mathsf{P}2)$ is convex with respective to $\{\mathbf{F}_i\}_{i=1}^L$.
\end{proof}

In the following we reformulate the subproblem ($\mathsf{P}2$) into a standard second order cone programming(SOCP) presentation. To this end, we introduce the following notations:
\begin{subequations}
\begin{align}
\mathbf{f}_i&\triangleq vec\big(\mathbf{F}_i\big);\ \ \ 
\mathbf{g}\triangleq vec\big(\mathbf{G}\big);\\
\mathbf{A}_{ij}&\triangleq(\mathbf{K}_j\mathbf{\Sigma}_{\mathbf{s}}\mathbf{K}_i^H)^T\otimes\Big(\mathbf{H}^H_{i}\mathbf{G}\mathbf{G}^H\mathbf{H}_j\Big);\\
\mathbf{B}_i&\triangleq(\mathbf{K}_i\mathbf{\Sigma}_{\mathbf{s}})^T\otimes\mathbf{H}_i;\\  
\mathbf{C}_i&\triangleq\mathbf{\Sigma}^*_i\otimes\Big(\mathbf{H}_i^H\mathbf{G}\mathbf{G}^H\mathbf{H}_i\Big).   \label{eq:notation_ABC}
\end{align}
\end{subequations}
By the identity $\mathsf{Tr}\{\mathbf{ABCD}\}=vec^T(\mathbf{D}^T)\big[\mathbf{C}^T\otimes\mathbf{A}\big]vec(\mathbf{B})$ and the above notations, we can rewrite the $\mathsf{MSE}$ in ($\mathsf{P}2$) as 
\begin{align}
\mathsf{MSE}\Big(\big\{\mathbf{f}_i\big\}_{i=1}^{L}\Big|\mathbf{g}\Big)&=\sum_{i=1}^{L}\sum_{j=1}^{L}\mathbf{f}_i^H\mathbf{A}_{ij}\mathbf{f}_j-2\mathsf{Re}\Big(\sum_{i=1}^L\mathbf{g}^H\mathbf{B}_i\mathbf{f}_i\Big) \nonumber \\
&\!\!\!\!+\sum_{i=1}^{L}\mathbf{f}_i^H\mathbf{C}_i\mathbf{f}_i+\sigma_0^2\|\mathbf{g}\|^2+\Tr\big\{\mathbf{\Sigma}_{\mathbf{s}}\big\}.\label{eq:MSE_def_equiv_2}
\end{align}

By further denoting
\begin{subequations}
\begin{align}
\!\!\!\!\!\!\!\!\!\!\!\!\!\!\!\!\!\!\mathbf{f}^T&\triangleq\big[\mathbf{f}_1^T, \cdots, \mathbf{f}_i^T, \cdots, \mathbf{f}_L^T\big]; \\
\mathbf{A}&\triangleq
\left[
\begin{array}{llll}
\mathbf{A}_{1,1} & \mathbf{A}_{1,2} & \cdots & \mathbf{A}_{1,L} \\
\mathbf{A}_{2,1} & \mathbf{A}_{2,2} & \cdots & \mathbf{A}_{2,L} \\
\vdots & \vdots & \ddots & \vdots \\
\mathbf{A}_{L,1} & \mathbf{A}_{L,2} & \cdots & \mathbf{A}_{L,L}
\end{array}
\right]; \\
\mathbf{B}&\triangleq\Big[\mathbf{B}_1,\cdots,\mathbf{B}_i,\cdots, \mathbf{B}_L\Big]; \\
\mathbf{C}&\triangleq \mathsf{diag}\Big\{\mathbf{C}_1, \cdots, \mathbf{C}_i, \cdots, \mathbf{C}_L\Big\}; \\
\mathbf{D}_i&\triangleq \mathsf{diag}\Big\{\mathbf{O}_{\sum_{j=1}^{i-1}J_jN_j}, \mathbf{E}_i, \mathbf{O}_{\sum_{j=i+1}^{L}J_jN_j}\Big\},\nonumber\\
&\qquad\qquad\qquad\qquad\qquad\qquad\quad i\in\{1,\cdots,L\};\\
\mathbf{E}_i&\triangleq\big(\mathbf{K}_i\mathbf{\Sigma}_{\mathbf{s}}\mathbf{K}_i^H\!\!+\!\!\mathbf{\Sigma}_i\big)^T\!\otimes\!\mathbf{I}_{N_i},\ \ \ \ \ i\in\{1,\cdots,L\};\\
c&\triangleq \Tr\{\mathbf{\Sigma}_{\mathbf{s}}\}+\sigma_0^2\|\mathbf{g}\|^2,\label{eq:P2_parameters}
\end{align}
\end{subequations}
the problem ($\mathsf{P}2$) can be rewritten as ($\mathsf{P}2^{\prime}$):
\begin{subequations}
\begin{align}
\!\!\!\!\!\!(\mathsf{P}2^{\prime}):\underset{\mathbf{f}}{\min}&\ \mathbf{f}^H\big(\mathbf{A}\!\!+\!\!\mathbf{C}\big)\mathbf{f}\!-\!2\Real\{\mathbf{g}^H\mathbf{B}\mathbf{f}\}\!+\!c, \label{eq:opt_prob2}\\
s.t.& \ \ \mathbf{f}^H\mathbf{D}_i\mathbf{f}\leq P_i, \ \ \ \ \ \ i\in\{1,\cdots, L\}. \label{eq:opt_contr2}
\end{align}\label{eq:opt_prob_P2_prime}
\end{subequations}
As proved by Theorem \ref{thm:P_2_convexity}, ($\mathsf{P}2^{\prime}$) (or equivalently   ($\mathsf{P}2$)) is convex, which implies $(\mathbf{A}\!+\!\mathbf{C})$ is positive semidefinite. Thus the square root $(\mathbf{A}\!+\!\mathbf{C})^{\frac{1}{2}}$ exists. The above problem can therefore be reformulated in an SOCP form as follows
\begin{subequations}
\begin{align}
&\!\!\!\!\!\!\!\!\!\!\!\!\!\!\!\!(\mathsf{P}2_{SOCP}): \underset{\mathbf{f},t,s}{\min.} \ t, \\
&\!\!\!\!\!\!\!\!\!\!\!\!\mathrm{s.t.}\ s-2\mathsf{Re}\{\mathbf{g}^H\mathbf{B}\mathbf{f}\}+c\leq t; \\
&
\left\| 
\begin{array}{c}
(\mathbf{A}\!+\!\mathbf{C})^{\frac{1}{2}}\mathbf{f} \\
\frac{s-1}{2}
\end{array}
\right\|_2\leq \frac{s\!+\!1}{2}; \\
&\ \ \ \ \ \ 
\left\| 
\begin{array}{c}
\mathbf{D}_i^{\frac{1}{2}}\mathbf{f} \\
\frac{P_i-1}{2}
\end{array}
\right\|_2\leq \frac{P_i\!+\!1}{2},\ \ i\in\{1,\cdots,L\};
\end{align}
\end{subequations}

$(\mathsf{P}2_{SOCP})$ can be numerically solved by off-the-shelf convex programming solvers, such as CVX \cite{bib:CVX}.

Summarizing the above discussions, the problem ($\mathsf{P}0$) can be solved by a 2-BCD algorithm: updating $\mathbf{G}$ by solving ($\mathsf{P}1$) and updating $\big\{\mathbf{F}_i\big\}_{i=1}^{L}$ by solving ($\mathsf{P}2^{\prime}$) alternatively, which is summarized in Algorithm \ref{alg:two_block}. 

\begin{algorithm}
\caption{2-BCD Algorithm to Solve ($\mathsf{P}0$)}
\label{alg:two_block}
\textbf{Initialization}: Randomly generate feasible $\{\mathbf{F}_i^{(0)}\}_{i=1}^{L}$, $i\in\{1,\cdots,L\}$; Compute $\mathbf{G}^{(0)}$ using (\ref{eq:G_MMSE})\; 
\Repeat{decrease of $\mathsf{MSE}$ is small enough or predefined number of iterations is reached}
{
 With $\mathbf{G}^{(j-1)}$ fixed, solve ($\mathsf{P}2^{\prime}$) and obtain $\{\mathbf{F}_i^{(j)}\}_{i=1}^{L}$\;
 With $\{\mathbf{F}_i^{(j)}\}_{i=1}^{L}$  fixed, compute $\mathbf{G}^{(j)}$ using (\ref{eq:G_MMSE})\;
}
\end{algorithm}

\subsection{Convergence of 2-BCD Algorithm}

In this subsection we study the convergence of the above 2-BCD algorithm. Consider the optimization problem $\min\{f(\mathbf{x})|\mathbf{x}\in\mathcal{X}\}$ with $f(\cdot)$ being continuously differentiable and the feasible domain $\mathcal{X}$ being closed and nonempty. A point $\mathbf{x}_0\in\mathcal{X}$ is a \emph{stationary point} if and only if $\nabla f(\mathbf{x}_0)(\mathbf{x}-\mathbf{x}_0)\geq0$, $\forall \mathbf{x}\in\mathcal{X}$, where $\nabla f(\mathbf{x}_0)$ denotes the gradient of $f$ at $\mathbf{x}_0$. For the proposed 2-BCD algorithm, we have the following convergence conclusion. 

\begin{theorem}
\label{thm:convergence_two_block}
The objective sequence $\{\mathsf{MSE}^{(j)}\}_{j=0}^{\infty}$ generated by the 2-BCD algorithm in Algorithm \ref{alg:two_block} is monotonically decreasing. If $\mathbf{K}_i\mathbf{\Sigma}_{\mathbf{s}}\mathbf{K}_i^H\succ0$ or $\mathbf{\Sigma}_i\succ0$ for all $i\in\{1,\cdots,L\}$, the solution sequence $\big\{\{\mathbf{F}_i^{(j)}\}_{i=1}^{L},\mathbf{G}^{(j)}\big\}_{j=1}^{\infty}$ generated by the 2-BCD algorithm has limit points and each limit point of $\big\{\{\mathbf{F}_i^{(j)}\}_{i=1}^{L},\mathbf{G}^{(j)}\big\}_{j=1}^{\infty}$ is a stationary point of ($\mathit{P0}$).
\end{theorem}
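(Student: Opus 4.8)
The plan is to prove the statement in three stages---monotone descent, existence of limit points, and stationarity of each limit point---exploiting throughout that both subproblems are solved \emph{exactly} and that the feasible set is the \emph{product} $\mathcal{F}\times\mathbb{C}^{K\times M}$, where $\mathcal{F}=\prod_{i=1}^{L}\{\mathbf{F}_i:\Tr\{\mathbf{F}_i(\mathbf{K}_i\mathbf{\Sigma}_{\mathbf{s}}\mathbf{K}_i^H+\mathbf{\Sigma}_i)\mathbf{F}_i^H\}\le P_i\}$ and $\mathbf{G}$ is unconstrained. Monotonicity is immediate: since $\{\mathbf{F}_i^{(j)}\}$ minimizes $\mathsf{MSE}(\cdot,\mathbf{G}^{(j-1)})$ over $\mathcal{F}$ and $\mathbf{G}^{(j)}$ minimizes $\mathsf{MSE}(\{\mathbf{F}_i^{(j)}\},\cdot)$,
\[
\mathsf{MSE}^{(j)}\le\mathsf{MSE}\big(\{\mathbf{F}_i^{(j)}\},\mathbf{G}^{(j-1)}\big)\le\mathsf{MSE}^{(j-1)}.
\]
As $\mathsf{MSE}=\Tr\{\mathbf{\Phi}\}\ge0$ is bounded below, $\{\mathsf{MSE}^{(j)}\}$ decreases to a limit $\mathsf{MSE}^{\infty}$, and both quantities in the display above converge to this common value.

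For existence of limit points I would use the hypothesis. If $\mathbf{K}_i\mathbf{\Sigma}_{\mathbf{s}}\mathbf{K}_i^H\succ0$ or $\mathbf{\Sigma}_i\succ0$, then $\mathbf{R}_i\triangleq\mathbf{K}_i\mathbf{\Sigma}_{\mathbf{s}}\mathbf{K}_i^H+\mathbf{\Sigma}_i\succ0$ (a sum of positive semidefinite matrices with at least one positive definite), so the constraint gives $\|\mathbf{F}_i\|_F^2\le P_i/\lambda_{\min}(\mathbf{R}_i)$ and each $\mathcal{F}_i$, hence $\mathcal{F}$, is compact and convex. Moreover, by (\ref{eq:cov_n0}) the matrix inverted in the Wiener update (\ref{eq:G_MMSE}) is lower bounded by $\sigma_0^2\mathbf{I}_M\succ0$ uniformly on $\mathcal{F}$, so $\mathbf{G}(\cdot)$ is a continuous map of $\{\mathbf{F}_i\}$ on the compact set $\mathcal{F}$ and $\{\mathbf{G}^{(j)}\}$ is bounded too. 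Bolzano--Weierstrass then yields convergent subsequences, so limit points exist.

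Fix a limit point with $(\{\mathbf{F}_i^{(j_k)}\},\mathbf{G}^{(j_k)})\to(\{\mathbf{F}_i^{\star}\},\mathbf{G}^{\star})$. Because the feasible set is a product, joint stationarity is equivalent to the two partial variational inequalities, which I would verify blockwise. The $\mathbf{G}$-block is easy: $\mathbf{G}^{(j_k)}$ is the unconstrained minimizer of $\mathsf{MSE}(\{\mathbf{F}_i^{(j_k)}\},\cdot)$, so $\nabla_{\mathbf{G}}\mathsf{MSE}(\{\mathbf{F}_i^{(j_k)}\},\mathbf{G}^{(j_k)})=\mathbf{0}$, and continuity of the gradient (the MSE is a polynomial, hence continuously differentiable) passes this to $\nabla_{\mathbf{G}}\mathsf{MSE}(\{\mathbf{F}_i^{\star}\},\mathbf{G}^{\star})=\mathbf{0}$.

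The $\{\mathbf{F}_i\}$-block is the main obstacle, and here the two-block structure is essential. The optimality of $\{\mathbf{F}_i^{(j_k)}\}$ reads $\mathsf{MSE}(\{\mathbf{F}_i^{(j_k)}\},\mathbf{G}^{(j_k-1)})\le\mathsf{MSE}(\{\mathbf{F}_i\},\mathbf{G}^{(j_k-1)})$ for all $\{\mathbf{F}_i\}\in\mathcal{F}$, but the objective is evaluated at the \emph{staggered} iterate $\mathbf{G}^{(j_k-1)}$, which need not converge to $\mathbf{G}^{\star}$ along this subsequence. I would resolve this by passing to a further subsequence with $\mathbf{G}^{(j_k-1)}\to\tilde{\mathbf{G}}$ (possible as $\{\mathbf{G}^{(j)}\}$ is bounded), giving in the limit $\mathsf{MSE}(\{\mathbf{F}_i^{\star}\},\tilde{\mathbf{G}})\le\mathsf{MSE}(\{\mathbf{F}_i\},\tilde{\mathbf{G}})$ for all $\{\mathbf{F}_i\}\in\mathcal{F}$. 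It then remains to identify $\tilde{\mathbf{G}}=\mathbf{G}^{\star}$: since $\mathsf{MSE}(\{\mathbf{F}_i^{(j)}\},\mathbf{G}^{(j-1)})\to\mathsf{MSE}^{\infty}$ and $\mathsf{MSE}(\{\mathbf{F}_i^{(j)}\},\mathbf{G}^{(j)})\to\mathsf{MSE}^{\infty}$, continuity gives $\mathsf{MSE}(\{\mathbf{F}_i^{\star}\},\tilde{\mathbf{G}})=\mathsf{MSE}(\{\mathbf{F}_i^{\star}\},\mathbf{G}^{\star})$; but $\mathsf{MSE}(\{\mathbf{F}_i^{\star}\},\cdot)$ is \emph{strictly} convex in $\mathbf{G}$---the matrix of its quadratic term is bounded below by $\sigma_0^2\mathbf{I}_M\succ0$---so its minimizer is unique and equals the Wiener filter $\mathbf{G}^{\star}$, forcing $\tilde{\mathbf{G}}=\mathbf{G}^{\star}$. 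The limiting inequality becomes $\mathsf{MSE}(\{\mathbf{F}_i^{\star}\},\mathbf{G}^{\star})\le\mathsf{MSE}(\{\mathbf{F}_i\},\mathbf{G}^{\star})$ for all $\{\mathbf{F}_i\}\in\mathcal{F}$; since $\mathsf{MSE}(\cdot,\mathbf{G}^{\star})$ is convex on the convex set $\mathcal{F}$ by Theorem \ref{thm:P_2_convexity}, this global-minimum property is exactly the $\{\mathbf{F}_i\}$-block variational inequality. Combining the two blocks gives stationarity of $(\{\mathbf{F}_i^{\star}\},\mathbf{G}^{\star})$ for $(\mathsf{P}0)$. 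The crux is precisely the staggered-index step, where uniqueness of the postcoder minimizer---guaranteed by strict convexity of the $\mathbf{G}$-subproblem---lets the two-block iteration close the argument without any uniqueness assumption on the precoder block.
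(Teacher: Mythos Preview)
Your proof is correct and reaches the same conclusion, but by a genuinely different route for the stationarity part. The paper simply invokes an external convergence result for two-block Gauss--Seidel schemes (Corollary~2 of \cite{bib:Gauss-Seidel}), which directly gives that any limit point is stationary once the objective is continuously differentiable and the feasible set is closed and convex. You instead unwind that black box and give a self-contained argument: you handle the staggered-index issue (optimality of $\{\mathbf{F}_i^{(j_k)}\}$ is with respect to $\mathbf{G}^{(j_k-1)}$, not $\mathbf{G}^{(j_k)}$) by passing to a further subsequence and then identifying the two limits via \emph{strict} convexity of $\mathsf{MSE}(\{\mathbf{F}_i^{\star}\},\cdot)$ in $\mathbf{G}$, which forces $\tilde{\mathbf{G}}=\mathbf{G}^{\star}$. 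This is precisely the mechanism that makes the cited two-block result work, so your argument is essentially a specialization of that theorem to the present setting; it buys transparency and independence from the external reference at the cost of some length. One minor remark: in your last step, deducing the $\{\mathbf{F}_i\}$-block variational inequality from global minimality requires only convexity of $\mathcal{F}$ and differentiability of the objective, not Theorem~\ref{thm:P_2_convexity}; the convexity of $(\mathsf{P}2)$ is not actually needed there.
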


\begin{proof}
Since each block update solves a minimization problem, $\mathsf{MSE}$ keeps decreasing. Let $\mathcal{X}_{i}=\big\{\mathbf{X}\in\mathbb{C}^{N_i\times J_i}\big|\Tr\{\mathbf{X}(\mathbf{K}_i\mathbf{\Sigma}_{\mathbf{s}}\mathbf{K}_i^H\!\!+\!\!\mathbf{\Sigma}_i)\mathbf{X}^H\}\leq P_i\big\}$, for $i=1,\cdots,L$ and $\mathcal{X}_{L\!+\!1}=\mathbb{C}^{M\times K}$. Under the strictly positive definiteness assumption of $\mathbf{K}_i\mathbf{\Sigma}_{\mathbf{s}}\mathbf{K}_i^H$ or $\mathbf{\Sigma}_i$, we have $\big(\mathbf{K}_i\mathbf{\Sigma}_{\mathbf{s}}\mathbf{K}_i^H\!\!+\!\!\mathbf{\Sigma}_i\big)\succ0$ and thus $\big(\mathbf{K}_i\mathbf{\Sigma}_{\mathbf{s}}\mathbf{K}_i^H\!\!+\!\!\mathbf{\Sigma}_i\big)^T\!\otimes\!\mathbf{I}_{N_i}\succ0$ for all $i\in\{1,\cdots,L\}$. This implies that the null space of $\big(\mathbf{K}_i\mathbf{\Sigma}_{\mathbf{s}}\mathbf{K}_i^H\!\!+\!\mathbf{\Sigma}_i\big)^T\!\otimes\!\mathbf{I}_{N_i}$ is $\{\mathbf{\mathbf{0}}\}$ and consequently $\mathbf{f}_i$ has to be bounded to satisfy power constraint. Therefore $\mathcal{X}_i$ is bounded for all $i\in\{1,\cdots,L\}$. Since the feasible set for each $\mathbf{F}_i$ is bounded, by Bolzano-Weierstrass theorem, there exists a convergent subsequence $\big\{\{\mathbf{F}^{(j_k)}_i\}_{i=1}^{L}\big\}_{k=1}^{\infty}$. Since $\mathbf{G}$ is updated by equation (\ref{eq:G_MMSE}) as a continuous function of $\{\mathbf{F}_i\}_{i=1}^L$, the subsequence $\{\mathbf{G}^{(j_k+1)}\}_{k=1}^{\infty}$ also converges and thus bounded. By further restricting to a subsequence of $\big\{\{\mathbf{F}_i^{(j_k+1)}\},\mathbf{G}^{(j_k+1)}\big\}_{k=1}^{\infty}$, we can obtain a convergent subsequence of $\big\{\{\mathbf{F}_i^{(j)}\}_{i=1}^{L},\mathbf{G}^{(j)}\big\}_{j=1}^{\infty}$.

Since Algorithm \ref{alg:two_block} is a two block coordinate descent procedure and the problem ($\mathsf{P}0$) has continuously differentiable objective and closed and convex feasible domain, Corollary 2 in \cite{bib:Gauss-Seidel} is valid to invoke, we conclude that any limit point of $\big\{\{\mathbf{F}_i^{(j)}\}_{i=1}^{L},\mathbf{G}^{(j)}\big\}_{j=1}^{\infty}$ is a stationary point of ($\mathsf{P}0$).
\end{proof}




\section{Multi-Block Coordinate Descent}
\label{sec:further_decoupling}

For the above 2-BCD algorithm, although we can solve the subproblem ($\mathsf{P}2$) as a standard SOCP problem, its closed-form solution is still inaccessible. The complexity for solving ($\mathsf{P}2$) can be shown to be $\mathcal{O}\Big(\sqrt{L}\big(\sum_{i=1}^{L}N_iJ_i\big)^3\Big)$, 
This implies that when the sensor network under consideration has a large number of sensors and/or antennae, the complexity for solving ($\mathsf{P}2$) can be rather daunting. This motivates us to search for more efficient ways to update sensor's beamformer.

\subsection{Further Decoupling of ($\mathit{P2}$) and Closed-Form Solution}
\label{subsec:further_decoupling}

Looking back to problem ($\mathsf{P}2$), although it has separable power constraints, its quadratic terms in its objective tangles different sensors' beamformers together and thus makes the Karush-Kuhn-Tucker(KKT) conditions of ($\mathsf{P}2$) analytically unsolvable. Here we adopt the BCD methodology to further decompose the subproblem ($\mathsf{P}2$).
Instead of optimizing all the $\mathbf{F}_i$'s in a single batch, we optimize one $\mathbf{f}_i$ at a time with the others being fixed. By introducing the notation $\mathbf{q}_i\triangleq\sum_{j=1,j\neq i}^{L}\mathbf{A}_{ij}\mathbf{f}_j$, each subproblem ($\mathsf{P}2^{\prime}_i$) of ($\mathsf{P2}^{\prime}$) is given as
\begin{subequations}
\begin{align}
\!\!\!\!\!\!\!\!\!\!(\mathsf{P}2^{\prime}_i):\underset{\mathbf{f}_i}{\min}\ &\mathbf{f}_i^H\big(\mathbf{A}_{ii}\!\!+\!\!\mathbf{C}_i\big)\mathbf{f}_i+2\Real\{\mathbf{q}_i^H\mathbf{f}_i\}\!\!-\!\!2\Real\{\mathbf{g}^H\mathbf{B}_i\mathbf{f}_i\}\\
s.t.\ & \mathbf{f}_i^H\mathbf{E}_i\mathbf{f}_i\leq P_i.\label{eq:P2prime_constraint}
\end{align}
\end{subequations}

Now our problem boils down to solving the simpler problem ($\mathsf{P}2^{\prime}_i$), for $i=1,\cdots,L$. The following theorem provides an \emph{almost} closed-form solution to ($\mathsf{P}2^{\prime}_i$). The only reason that this is not a \emph{fully} closed-form solution is because it may involve a bisection search to determine the value of a positive real number. 

\begin{theorem}\label{thm:sol_P2_i}
Assume $\mathbf{K}_i\mathbf{\Sigma}_{\mathbf{s}}\mathbf{K}_i^H\succ0$ or $\mathbf{\Sigma}_i\succ0$. Define parameters $\mathbf{M}_i$, $\mathbf{U}_i$ and $\mathbf{p}_i$ as in equations (\ref{eq:new_def}) in the appendix, $r_i$ as the rank of $\mathbf{M}_i$ and $p_{i,k}$ as the $i$-th entry of $\mathbf{p}_i$. The solution to ($\mathit{P2}_i^{\prime}$) is given as follows: \newline
\underline{$\mathsf{CASE}$ (\uppercase\expandafter{\romannumeral1})}---if either of the following two conditions holds:\newline
\indent\indent\indent $\ \ \ \ $ \expandafter{\romannumeral1}) $\exists k\in\{r_i+1,\cdots,J_iN_i\}$ such that $|p_{i,k}|\neq0$;\newline
\indent\indent\indent or \ \expandafter{\romannumeral2}) $\sum_{k\!=\!r_i\!+\!1}^{J_iN_i}|p_{i,k}|=0$ and $\sum_{k\!=\!1}^{r_i}\frac{|p_{i,k}|^2}{\lambda_{i,k}^2}>P_i$.\newline
The optimal solution to ($\it{P2}_i^{\prime}$) is given by
\begin{align}
\label{eq:opt_f_i_1}
\mathbf{f}_i^{\star}=\big(\mathbf{A}_{ii}\!\!+\!\!\mathbf{C}_i\!\!+\!\!\mu_i^{\star}\mathbf{E}_i\big)^{-1}\big(\mathbf{B}_i^H\mathbf{g}-\mathbf{q}_i\big),
\end{align}
with the positive value $\mu_i^{\star}$ being the unique solution to the equation:
$g_i(\mu_{i})=\sum_{k=1}^{J_iN_i}\frac{|p_{i,k}|^2}{(\lambda_{i,k}+\mu_i)^2}=P_i$.
An interval $\big[lbd_i, ubd_i\big]$ containing $\mu_i^{\star}$ is determined by Lemma \ref{lem:bound} which comes later.
\newline
\underline{$\mathsf{CASE}$ (\uppercase\expandafter{\romannumeral2})}---$\sum_{k\!=\!r_i\!+\!1}^{J_iN_i}|p_{i,k}|=0$ and $\sum_{k\!=\!1}^{r_i}\frac{|p_{i,k}|^2}{\lambda_{i,k}^2}\leq P_i$,\newline
The optimal solution to ($\it{P2}_i^{\prime}$) is given by
\begin{align}
\label{eq:opt_f_i_2}
\mathbf{f}_i^{\star}=\mathbf{E}_i^{-\frac{1}{2}}\Big(\mathbf{E}_i^{-\frac{1}{2}}\big(\mathbf{A}_{ii}\!\!+\!\!\mathbf{C}_i\big)\mathbf{E}_i^{-\frac{1}{2}}\Big)^{\dagger}\mathbf{E}_i^{-\frac{1}{2}}\big(\mathbf{B}_i^H\mathbf{g}-\mathbf{q}_i\big).
\end{align}
\end{theorem}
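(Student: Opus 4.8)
The plan is to treat $(\mathsf{P}2'_i)$ as a convex quadratically-constrained quadratic program with a single constraint and to solve it through its Karush--Kuhn--Tucker (KKT) system, which here is both necessary and sufficient. Writing $\mathbf{b}_i\triangleq\mathbf{B}_i^H\mathbf{g}-\mathbf{q}_i$, the objective reads $\mathbf{f}_i^H(\mathbf{A}_{ii}+\mathbf{C}_i)\mathbf{f}_i-2\Real\{\mathbf{b}_i^H\mathbf{f}_i\}$ subject to $\mathbf{f}_i^H\mathbf{E}_i\mathbf{f}_i\le P_i$. The Lagrangian stationarity condition gives $(\mathbf{A}_{ii}+\mathbf{C}_i+\mu_i\mathbf{E}_i)\mathbf{f}_i=\mathbf{b}_i$, together with $\mu_i\ge0$, primal feasibility, and complementary slackness $\mu_i(\mathbf{f}_i^H\mathbf{E}_i\mathbf{f}_i-P_i)=0$. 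Convexity was already established in Theorem \ref{thm:P_2_convexity} (so $\mathbf{A}_{ii}+\mathbf{C}_i\succeq0$), and Slater's condition holds trivially since $\mathbf{f}_i=\mathbf{0}$ is strictly feasible; hence any KKT point is globally optimal.

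First I would exploit the hypothesis $\mathbf{K}_i\mathbf{\Sigma}_{\mathbf{s}}\mathbf{K}_i^H\succ0$ or $\mathbf{\Sigma}_i\succ0$, which (as in Theorem \ref{thm:convergence_two_block}) forces $\mathbf{E}_i\succ0$, so that $\mathbf{E}_i^{1/2}$ and $\mathbf{E}_i^{-1/2}$ exist. Substituting $\tilde{\mathbf{f}}_i=\mathbf{E}_i^{1/2}\mathbf{f}_i$ whitens the constraint to $\|\tilde{\mathbf{f}}_i\|_2^2\le P_i$ and turns the quadratic form into $\mathbf{M}_i\triangleq\mathbf{E}_i^{-1/2}(\mathbf{A}_{ii}+\mathbf{C}_i)\mathbf{E}_i^{-1/2}\succeq0$. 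Diagonalizing $\mathbf{M}_i=\mathbf{U}_i\mathbf{\Lambda}_i\mathbf{U}_i^H$ with $\mathbf{U}_i$ unitary and rotating by $\hat{\mathbf{f}}_i=\mathbf{U}_i^H\tilde{\mathbf{f}}_i$ and $\mathbf{p}_i=\mathbf{U}_i^H\mathbf{E}_i^{-1/2}\mathbf{b}_i$ (precisely the quantities of (\ref{eq:new_def})), the problem becomes the coordinate-separable minimization of $\sum_k\lambda_{i,k}|\hat f_{i,k}|^2-2\Real\{p_{i,k}^\ast\hat f_{i,k}\}$ subject to $\sum_k|\hat f_{i,k}|^2\le P_i$, with stationarity $(\lambda_{i,k}+\mu_i)\hat f_{i,k}=p_{i,k}$.

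The case split then follows from whether the constraint is active. In the null-space directions $k>r_i$ we have $\lambda_{i,k}=0$, so a nonzero $p_{i,k}$ makes the objective affine and unbounded below along that line unless the ball constraint binds; this is condition i) of CASE (I) and forces $\mu_i>0$. When all such $p_{i,k}$ vanish, the candidate range-space minimizer $\hat f_{i,k}=p_{i,k}/\lambda_{i,k}$ has squared norm $\sum_{k\le r_i}|p_{i,k}|^2/\lambda_{i,k}^2$; if this exceeds $P_i$ (condition ii) of CASE (I)) the constraint again binds and $\mu_i>0$, otherwise $\mu_i=0$ is admissible (CASE (II)). For CASE (I), substituting $\hat f_{i,k}=p_{i,k}/(\lambda_{i,k}+\mu_i)$ into the active constraint yields $g_i(\mu_i)=\sum_k|p_{i,k}|^2/(\lambda_{i,k}+\mu_i)^2=P_i$; I would show $g_i$ is continuous and strictly decreasing on $(0,\infty)$ over the support of the nonzero $p_{i,k}$, with limit $+\infty$ as $\mu_i\to0^+$ under condition i) or value exceeding $P_i$ there under condition ii), and limit $0$ at infinity, giving a unique positive root $\mu_i^\star$ locatable by bisection on the interval of Lemma \ref{lem:bound}. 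Undoing the two changes of variables via the identity $\mathbf{E}_i^{-1/2}(\mathbf{M}_i+\mu_i\mathbf{I})^{-1}\mathbf{E}_i^{-1/2}=(\mathbf{A}_{ii}+\mathbf{C}_i+\mu_i\mathbf{E}_i)^{-1}$ recovers (\ref{eq:opt_f_i_1}).

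For CASE (II) I would set $\mu_i^\star=0$; stationarity then determines only the range-space components $\hat f_{i,k}=p_{i,k}/\lambda_{i,k}$ and leaves the null-space components free, so among all optimal solutions I select the minimum-energy one by zeroing the free coordinates, i.e. $\hat{\mathbf{f}}_i=\mathbf{\Lambda}_i^\dagger\mathbf{p}_i$, and transform back to obtain the pseudoinverse form (\ref{eq:opt_f_i_2}). The main obstacle---exactly the point the surrounding discussion flags as incompletely treated elsewhere---is the rigorous handling of this degenerate zero-multiplier, rank-deficient regime: justifying that the energy-preserving pseudoinverse selection is a genuine optimizer, cleanly separating it from the active-constraint regime through the precise threshold conditions i) and ii), and certifying that $g_i(\mu_i)=P_i$ admits a unique positive solution.
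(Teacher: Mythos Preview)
Your proposal is correct and follows essentially the same route as the paper's proof: whiten the constraint via $\tilde{\mathbf{f}}_i=\mathbf{E}_i^{1/2}\mathbf{f}_i$, diagonalize $\mathbf{M}_i$, reduce to a separable ball-constrained quadratic, and split on whether the constraint is active by testing the null-space components of $\mathbf{p}_i$ and the threshold $\sum_{k\le r_i}|p_{i,k}|^2/\lambda_{i,k}^2$ against $P_i$. Your unboundedness argument for subcase~i) is a slightly different justification than the paper's (which argues directly that $g_i(\mu_i)\to+\infty$ as $\mu_i\to0^+$), but the content is the same; and your selection of the minimum-energy optimizer via $\mathbf{\Lambda}_i^\dagger$ in CASE~(II) matches the paper's pseudoinverse construction and its explicit feasibility check.
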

\begin{proof}
See Appendix \ref{subsec:appendix_thm_sol_P2_i}.
\end{proof}

Here we have several comments and supplementary discussions on the solution to ($\mathsf{P}2_i^{\prime}$).

Recall that in $\mathsf{CASE}$ (\uppercase\expandafter{\romannumeral1}) of Thoerem \ref{thm:sol_P2_i}, $\mu_i^{\star}$ is obtained as the solution to $g_i(\mu_{i})=P_i$. 
This equation generally has no analytic solution. Fortunately $g_i(\mu_i)$ is strictly decreasing in $\mu_i$ and thus the equation can be efficiently solved by a bisection search. The following lemma provides an interval $[lbd_i,ubd_i]$ containing the positive $\mu_i^{\star}$, from which the bisection search to determine $\mu_i^{\star}$ can be started.

\begin{lemma}\label{lem:bound}
The positive $\mu_i^{\star}$ in ($\mathit{P2}_i^{\prime}$) (i.e. CASE (I) in Theorem \ref{thm:sol_P2_i}) has the following lower bound $lbd_i$ and  upper bound $ubd_i$:
\newline
\indent\expandafter{\romannumeral1}) For subcase \expandafter{\romannumeral1})
\begin{align}
\!\!\!\!\!\!\!\!\!\! lbd_i=\left[\frac{\|\mathbf{p}_i\|_2}{\sqrt{P_i}}-\lambda_{i,1}\right]^{+},
\ \ \  ubd_i=\frac{\|\mathbf{p}_i\|_2}{\sqrt{P_i}};\label{eq:mu_i_bound_1}
\end{align}
\indent\expandafter{\romannumeral2}) For subcase \expandafter{\romannumeral2})
\begin{align}
\ \ lbd_i=\left[\frac{\|\mathbf{p}_i\|_2}{\sqrt{P_i}}-\lambda_{i,1}\right]^{+}, \ \ \
ubd_i=\frac{\|\mathbf{p}_i\|_2}{\sqrt{P_i}}-\lambda_{i,r_i},\label{eq:mu_i_bound_2}
\end{align}
where $[x]^+=\max\{0,x\}$.
\end{lemma}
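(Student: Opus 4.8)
The plan is to exploit the explicit form of the scalar function $g_i(\mu_i)=\sum_{k=1}^{J_iN_i}\frac{|p_{i,k}|^2}{(\lambda_{i,k}+\mu_i)^2}$ together with the characterization of $\mu_i^{\star}$ established in Theorem \ref{thm:sol_P2_i}, namely that $\mu_i^{\star}$ is the unique positive root of $g_i(\mu_i)=P_i$ and that $g_i$ is strictly decreasing. I would sandwich $g_i(\mu_i)$ between two simpler rational functions of $\mu_i$, obtained by replacing every eigenvalue $\lambda_{i,k}$ appearing in the denominators by an appropriate extreme value; evaluating these bounds at $\mu_i^{\star}$ (where $g_i$ equals $P_i$) and inverting then yields explicit inequalities for $\mu_i^{\star}$. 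Throughout I rely on the eigenvalue ordering fixed in the proof of Theorem \ref{thm:sol_P2_i}, so that $\lambda_{i,1}\geq\cdots\geq\lambda_{i,r_i}>0$ are the nonzero eigenvalues and $\lambda_{i,k}=0$ for $k>r_i$, and on the identity $\|\mathbf{p}_i\|_2^2=\sum_{k}|p_{i,k}|^2$.

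For the lower bound, which is common to both subcases, I would use $\lambda_{i,k}\leq\lambda_{i,1}$ for every $k$, so that $\lambda_{i,k}+\mu_i\leq\lambda_{i,1}+\mu_i$ and hence $g_i(\mu_i)\geq\frac{\|\mathbf{p}_i\|_2^2}{(\lambda_{i,1}+\mu_i)^2}$. Setting $\mu_i=\mu_i^{\star}$ and invoking $g_i(\mu_i^{\star})=P_i$ gives $(\lambda_{i,1}+\mu_i^{\star})^2\geq\|\mathbf{p}_i\|_2^2/P_i$, i.e. $\mu_i^{\star}\geq\frac{\|\mathbf{p}_i\|_2}{\sqrt{P_i}}-\lambda_{i,1}$. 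Because $\mu_i^{\star}>0$ in CASE (I), I would combine this with the trivial bound $\mu_i^{\star}>0$ to obtain $lbd_i=\big[\frac{\|\mathbf{p}_i\|_2}{\sqrt{P_i}}-\lambda_{i,1}\big]^{+}$.

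For the upper bounds I would distinguish the two subcases by which terms of $g_i$ actually survive. In subcase \expandafter{\romannumeral1}) there is a nonzero component $p_{i,k}$ with $k>r_i$, so a term with $\lambda_{i,k}=0$ is genuinely present; the strongest uniform lower bound on the denominators is then $\lambda_{i,k}+\mu_i\geq\mu_i$, giving $g_i(\mu_i)\leq\|\mathbf{p}_i\|_2^2/\mu_i^2$ and hence $\mu_i^{\star}\leq\frac{\|\mathbf{p}_i\|_2}{\sqrt{P_i}}=ubd_i$. In subcase \expandafter{\romannumeral2}) all null-space components vanish, so only the terms $k\leq r_i$ contribute; for those I may use the tighter bound $\lambda_{i,k}+\mu_i\geq\lambda_{i,r_i}+\mu_i$, which yields $g_i(\mu_i)\leq\frac{\|\mathbf{p}_i\|_2^2}{(\lambda_{i,r_i}+\mu_i)^2}$ and therefore $\mu_i^{\star}\leq\frac{\|\mathbf{p}_i\|_2}{\sqrt{P_i}}-\lambda_{i,r_i}=ubd_i$.

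This argument is short and the only real care needed is the bookkeeping over which summands are present in each subcase, which is precisely what dictates the different upper bounds. The step I expect to require the most attention is the consistency check that each interval $[lbd_i,ubd_i]$ is nonempty and actually contains the positive root: for subcase \expandafter{\romannumeral2}) I would verify $ubd_i>0$ by noting that the defining condition $\sum_{k=1}^{r_i}\frac{|p_{i,k}|^2}{\lambda_{i,k}^2}>P_i$ together with $\lambda_{i,k}\geq\lambda_{i,r_i}$ forces $\|\mathbf{p}_i\|_2^2/\lambda_{i,r_i}^2>P_i$, hence $\lambda_{i,r_i}<\frac{\|\mathbf{p}_i\|_2}{\sqrt{P_i}}$, and then $lbd_i\leq ubd_i$ follows from $\lambda_{i,1}\geq\lambda_{i,r_i}$. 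The analogous check for subcase \expandafter{\romannumeral1}) is immediate since $\lambda_{i,1}\geq0$.
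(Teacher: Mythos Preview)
Your proposal is correct and follows essentially the same approach as the paper: both sandwich $g_i(\mu_i)$ between $\frac{\|\mathbf{p}_i\|_2^2}{(\lambda_{i,1}+\mu_i)^2}$ and either $\frac{\|\mathbf{p}_i\|_2^2}{\mu_i^2}$ or $\frac{\|\mathbf{p}_i\|_2^2}{(\lambda_{i,r_i}+\mu_i)^2}$ by replacing all eigenvalues with the extreme one, then evaluate at $\mu_i^{\star}$ and invert. Your additional consistency check that $[lbd_i,ubd_i]$ is nonempty is a small bonus the paper omits.
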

\begin{proof}
For subcase \expandafter{\romannumeral1}), by definition of $g_i(\mu_i)$ in  (\ref{eq:KKT_analysis_2}), we have
\begin{align}
\frac{\|\mathbf{p}_i\|_2^2}{(\mu_i+\lambda_{i,1})^2}=\frac{\sum_{k=1}^{J_iN_i}|p_{i,k}|^2}{(\mu_i+\lambda_{i,1})^2}&\leq g_i(\mu_i)=P_i\nonumber\\
&\!\!\!\!\!\!\!\!\!\!\leq\frac{\sum_{k=1}^{J_iN_i}|p_{i,k}|^2}{\mu_i^2}=\frac{\|\mathbf{p}_i\|_2^2}{\mu_i^2},
\end{align}
which can be equivalently written as
\begin{align}
\frac{\|\mathbf{p}_i\|_2}{\sqrt{P_i}}-\lambda_{i,1}\leq\mu_i\leq\frac{\|\mathbf{p}_i\|_2}{\sqrt{P_i}}.
\end{align}
Also notice that $\mu_i^{\star}$ should be positive; the bounds in (\ref{eq:mu_i_bound_1}) thus follow.

For subcase \expandafter{\romannumeral2}), by assumption, $\sum_{k=r_i+1}^{J_iN_i}|p_{i,k}|^2\!=\!0$. This leads to 
\begin{align}
\frac{\|\mathbf{p}_i\|_2^2}{(\mu_i+\lambda_{i,1})^2}=\frac{\sum_{k=1}^{r_i}|p_{i,k}|^2}{(\mu_i+\lambda_{i,1})^2}&\leq g_i(\mu_i)=P_i\nonumber\\
&\!\!\!\!\!\!\!\!\!\!\!\!\!\!\!\!\!\!\!\!\!\!\leq\frac{\sum_{k=1}^{r_i}|p_{i,k}|^2}{(\mu_i+\lambda_{i,r_i})^2}=\frac{\|\mathbf{p}_i\|_2^2}{(\mu_i+\lambda_{i,r_i})^2}.
\end{align}
Following the same line of derivation as in subcase \expandafter{\romannumeral1}), we obtain the bounds in (\ref{eq:mu_i_bound_2}).
\end{proof}

\begin{algorithm}
\caption{Solving the Problem ($\mathsf{P}2_i^{\prime}$)}
\label{alg:P_2_i}
\textbf{Initialization}: Perform eigenvalue decomposition $\mathbf{M}_i=\mathbf{U}_i\mathbf{\Lambda}_i\mathbf{U}_i^H$; Calculate $\mathbf{p}_i$ using (\ref{eq:p_i_definition})\;

\uIf{$\Big(\exists k\in\{r_i+1,\cdots,J_iN_i\}$ s.t. $|p_{i,k}|\neq0\Big)$ $\mathrm{or}$ $\Big(\sum_{k=r_i+1}^{J_iN_i}|p_{i,k}|^2\!=\!0$ and $\sum_{k\!=\!1}^{r_i}\!\!\!\frac{\ |p_{i,k}|^2}{\lambda_{i,k}^2}\!>\!P_i\Big)$}
{   Determine bounds $lbd_i$ and $ubd_i$ via (\ref{eq:mu_i_bound_1}) or (\ref{eq:mu_i_bound_2}) \;
    Bisection search on $\big[lbd_i,ubd_i\big]$ to determine $\mu_i^{\star}$\;
    $\mathbf{f}_i^{\star}\!=\!\big(\mathbf{A}_{ii}\!+\!\mathbf{C}_i\!\!+\!\!\mu_i^{\star}\mathbf{E}_i\big)^{-1}\big(\mathbf{B}_i^H\mathbf{g}\!-\!\mathbf{q}_i\big)$\;
}
\Else
{   
    $\mathbf{f}_i^{\star}\!=\!\mathbf{E}_i^{-\frac{1}{2}}\Big(\mathbf{E}_i^{-\frac{1}{2}}\big(\mathbf{A}_{ii}\!+\!\mathbf{C}_i\big)\mathbf{E}_i^{-\frac{1}{2}}\Big)^{\dagger}\mathbf{E}_i^{-\frac{1}{2}}\big(\mathbf{B}_i^H\mathbf{g}\!-\!\mathbf{q}_i\big)$\;
}
\end{algorithm}

Algorithm \ref{alg:P_2_i} summarizes the results obtained in Theorem \ref{thm:sol_P2_i} and Lemma \ref{lem:bound} and provides a (nearly) closed-form solution to ($\mathsf{P}2'_i$).


\subsection{Layered-BCD Algorithm}
\label{subsec:layered_block}

The above analysis of ($\mathsf{P}2_i'$), combined with ($\mathsf{P}1$), naturally leads to a nested or layered-BCD algorithm, that can be used to analytically solve the joint beamforming problem ($\mathsf{P}0$). 
The algorithm consists of two loops (two layers). The outer-loop is a two-block descent procedure alternatively optimizing $\mathbf{G}$ and $\{\mathbf{F}_i\}_{i=1}^{L}$, and the inner-loop further decomposes the optimization of $\{\mathbf{F}_i\}_{i=1}^{L}$ into an $L$-block descent procedure operated in an iterative round robin fashion. Algorithm \ref{alg:layered_block} outlines the overall procedure. As will be seen in the next, this layered-BCD has strong convergence property. 

\begin{algorithm}
\caption{Layered-BCD Algorithm to Solve ($\mathsf{P}0$)}
\label{alg:layered_block}
\textbf{Initialization}: Randomly generate feasible $\{\mathbf{F}_i^{(0)}\}_{i=1}^{L}$ \;
        Obtain $\mathbf{G}^{(0)}$ by (\ref{eq:G_MMSE})\; 
\Repeat{decrease of $\mathsf{MSE}$ is sufficiently small or predefined number of iterations is reached}
{
    \Repeat{decrease of $\mathsf{MSE}$ is sufficiently small}	
	{
		\For{$i=1;\ i<=L;\ i++$}
		{
		  Given $\mathbf{G}$ and $\{\mathbf{F}_j\}_{j\neq i}$, update $\mathbf{F}_i$ by Theorem \ref{thm:sol_P2_i}\;
		}
	}
    Given $\big\{\mathbf{F}_i\big\}_{i=1}^{L}$, update $\mathbf{G}^{}$ via (\ref{eq:G_MMSE}) \;
}
\end{algorithm}

\begin{theorem}
\label{thm:convergence_layered_block}
Assume that $\mathbf{K}_i\mathbf{\Sigma}_{\mathbf{s}}\mathbf{K}_i^H\succ0$ or $\mathbf{\Sigma}_i\succ0$, $\forall i\in\{1,\cdots,L\}$. The objective sequence $\{\mathsf{MSE}^{(j)}\}_{j=0}^{\infty}$ generated by Algorithm \ref{alg:layered_block} is monotonically decreasing. The solution sequence $\big\{\{\mathbf{F}_i^{(j)}\}_{i=1}^{L},\mathbf{G}^{(j)}\big\}_{j=1}^{\infty}$ generated by Algorithm \ref{alg:layered_block} has limit points, and each limit point 
is a stationary point of ($\mathit{P}0$).
\end{theorem}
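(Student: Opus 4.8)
The plan is to establish the two claims—monotone decrease of the objective and stationarity of every limit point—by reducing the layered scheme to a genuine multi-block coordinate descent procedure and invoking the same machinery used for Theorem 2. I treat the $L+1$ blocks $\mathbf{F}_1,\dots,\mathbf{F}_L,\mathbf{G}$, each with its own closed convex feasible set $\mathcal{X}_i$ (the power ball for the precoders, all of $\mathbb{C}^{M\times K}$ for the postcoder). The key structural observation is that one full outer iteration of Algorithm 3 performs a sequence of exact block minimizations: the inner loop cyclically and exactly minimizes $\mathsf{MSE}$ over each $\mathbf{f}_i$ (with the remaining $\mathbf{f}_j$ and $\mathbf{G}$ fixed) via Theorem 2, and the outer step exactly minimizes over $\mathbf{G}$ via the Wiener filter $(\ref{eq:G_MMSE})$.

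First I would prove monotonicity. Since every block update in Theorem 2 returns the \emph{global} minimizer of a convex subproblem $(\mathsf{P}2_i^{\prime})$ and $(\ref{eq:G_MMSE})$ is the global minimizer of $(\mathsf{P}1)$, each update can only decrease (or leave unchanged) the total $\mathsf{MSE}$. The objective is bounded below by $0$, so the monotone sequence $\{\mathsf{MSE}^{(j)}\}$ converges. Next I would secure boundedness of the iterates exactly as in Theorem 2: under the hypothesis $\mathbf{K}_i\mathbf{\Sigma}_{\mathbf{s}}\mathbf{K}_i^H\succ0$ or $\mathbf{\Sigma}_i\succ0$, the matrix $\mathbf{E}_i=\big(\mathbf{K}_i\mathbf{\Sigma}_{\mathbf{s}}\mathbf{K}_i^H+\mathbf{\Sigma}_i\big)^T\otimes\mathbf{I}_{N_i}\succ0$, so each feasible set $\mathcal{X}_i$ is bounded; thus the $\{\mathbf{F}_i^{(j)}\}$ live in a compact set, and since $\mathbf{G}^{(j)}$ is a continuous function of the precoders through $(\ref{eq:G_MMSE})$, the full sequence $\big\{\{\mathbf{F}_i^{(j)}\},\mathbf{G}^{(j)}\big\}$ is bounded. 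By Bolzano–Weierstrass a convergent subsequence exists, establishing the existence of limit points.

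For stationarity I would appeal to the cyclic (Gauss–Seidel) convergence result already cited, namely Corollary 2 in \cite{bib:Gauss-Seidel}. The decisive point is that this result requires each block subproblem to have a \emph{unique} minimizer (or, in the weaker version, that the per-block minimum be attained and the objective be continuously differentiable with convex blockwise constraints). The objective $\mathsf{MSE}$ is continuously differentiable and each $\mathcal{X}_i$ is closed and convex, so the hypotheses of the corollary are met once I verify the blockwise regularity it demands; stationarity of every limit point of a cyclically updated BCD sequence then follows directly. I would carefully note that the inner loop being run to (approximate) convergence before updating $\mathbf{G}$ does not disturb this conclusion: the inner iterations are themselves cyclic block minimizations over $\mathbf{f}_1,\dots,\mathbf{f}_L$, so the whole Algorithm 3 is one essentially-cyclic BCD on the $L+1$ blocks in which every block is visited infinitely often and minimized exactly.

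\textbf{Main obstacle.} The delicate step is the uniqueness/regularity condition underlying Corollary 2 of \cite{bib:Gauss-Seidel}. Unlike the $\mathbf{G}$-block (where $(\mathsf{P}1)$ is \emph{strictly} convex, hence uniquely solvable), the $\mathbf{f}_i$-subproblem can fail to have a unique minimizer precisely in $\mathsf{CASE}$ (\uppercase\expandafter{\romannumeral2}) of Theorem 2, where $\mathbf{M}_i$ is singular and $\mu_i^{\star}=0$. I would handle this either by observing that the energy-preserving pseudoinverse solution $(\ref{eq:opt_f_i_2})$ furnishes a canonical, continuously-selected minimizer that keeps the argument within the scope of the more-than-two-block Gauss–Seidel guarantee (the multi-block result tolerates non-unique blocks provided a compatible selection and the per-block minimum is attained on a convex set), or, more cleanly, by appealing to the fact that each blockwise minimum is uniquely attained along the coordinate direction after projecting out the null space, so the sufficient decrease still forces successive iterates to converge together and the limit point satisfies the blockwise optimality conditions, from which the full KKT/stationarity condition for $(\mathsf{P}0)$ is assembled.
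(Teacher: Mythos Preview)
Your monotonicity and boundedness arguments are fine and match the paper's. The gap is in the stationarity step. You propose to treat Algorithm~\ref{alg:layered_block} as a flat $(L{+}1)$-block Gauss--Seidel procedure and invoke Corollary~2 of \cite{bib:Gauss-Seidel}. But that corollary is precisely the \emph{two-block} result: it is the one place in Grippo--Sciandrone where stationarity of limit points is guaranteed without blockwise uniqueness. For three or more blocks no such waiver is available (Powell's counterexample \cite{bib:BCD_counterExample} shows cyclic BCD on a smooth objective with convex constraints can fail to approach stationary points when block minimizers are non-unique). Your proposed fixes---using the pseudoinverse selection \eqref{eq:opt_f_i_2} as a ``canonical'' minimizer, or projecting out the null space---do not by themselves recover the hypotheses of any multi-block theorem in \cite{bib:Gauss-Seidel}; a continuous selection rule is not what those results ask for, and the claim that ``the multi-block result tolerates non-unique blocks provided a compatible selection'' is simply not supported by the cited reference.

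The paper's proof exploits exactly the layered structure you discard. The inner loop is an $L$-block BCD applied to $(\mathsf{P}2)$, whose objective is \emph{convex} in $\{\mathbf{f}_i\}$ by Theorem~\ref{thm:P_2_convexity}. For pseudoconvex objectives with compact level sets, Proposition~6 of \cite{bib:Gauss-Seidel} guarantees that every limit point of cyclic BCD is stationary---\emph{without} any uniqueness assumption on the block subproblems---and stationarity plus convexity means the inner loop actually reaches a global minimizer of $(\mathsf{P}2)$. Once $(\mathsf{P}2)$ is globally solved at each outer step, the outer iteration is a genuine two-block scheme alternating $\{\mathbf{F}_i\}$ and $\mathbf{G}$, and now Corollary~2 of \cite{bib:Gauss-Seidel} (i.e.\ Theorem~\ref{thm:convergence_two_block}) legitimately applies. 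So the missing idea in your plan is to use the convexity of $(\mathsf{P}2)$, via Proposition~6, to collapse the inner $L$ blocks into a single exactly-solved block before invoking the two-block guarantee.
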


\begin{proof}
The proof of the monotonicity of $\{\mathsf{MSE}^{(j)}\}_{j=0}^{\infty}$ and the existence of limit points for the solution sequence follows the same lines as those of Theorem \ref{thm:convergence_two_block}.

From Theorem \ref{thm:P_2_convexity}, given $\mathbf{G}$, the objective function $\mathsf{MSE}\big(\{\mathbf{F}_i\}_{i=1}^{L}\big|\mathbf{G}\big)$ of Problem (\ref{eq:opt_prob_P2_prime}) is convex (and therefore, of course, pseudoconvex) with respect to $\{\mathbf{f}_i\}_{i=1}^{L}$. Since the objective $\mathsf{MSE}\big(\{\mathbf{F}_i\}_{i=1}^{L}\big|\mathbf{G}\big)$ in ($\mathsf{P}2$) is continuous and the feasible domain of $\{\mathbf{F}_i\}_{i=1}^{L}$ is bounded, there exists some feasible point $\{\bar{\mathbf{F}}_i\}_{i=1}^{L}$ making the level set $\Big\{\{\mathbf{F}_i\}_{i=1}^{L}\in\mathbb{C}^{J_1N_1\times 1}\times\cdots\times\mathbb{C}^{J_LN_L\times 1}\big|\,\mathsf{MSE}\big(\{\mathbf{F}_i\}_{i=1}^{L}\big|\mathbf{G}\big)\leq \mathsf{MSE}\big(\{\bar{\mathbf{F}}_i\}_{i=1}^{L}\big|\mathbf{G}\big)\Big\}$ closed and bounded. Thus Proposition 6 in \cite{bib:Gauss-Seidel} is valid to invoke. For a given $\mathbf{G}$  at any step of outer-loop, the inner loop generates limit point(s) converging to a stationary point of the problem ($\mathsf{P}2$). Since ($\mathsf{P}2$) is a convex problem, any stationary point is actually an optimal solution \cite{bib:nonlinear_programming_Bertsekas}. Therefore the subproblem ($\mathsf{P}2$) is actually globally solved. By Theorem \ref{thm:convergence_two_block}, each limit point of solution sequence is a stationary point of the original problem ($\mathsf{P}0$).
\end{proof}

Although the convergence analysis in Theorem \ref{thm:convergence_layered_block} states that the layered-BCD algorithm guarantees convergence, it requires the inner-loop to iterate numerous times to converge sufficiently. In fact if each inner loop is performed with a small number of iterations, the layered BCD algorithm becomes a specialized essentially cyclic BCD algorithm, which will be discussed in next subsection.

\subsection{Essentially Cyclic $(L+1)$-BCD Algorithm}
\label{subsec:EssentiallyCyclci}

In this subsection, we propose an ($L+1$)-BCD algorithm, where in each update the linear FC receiver or one single beamformer is updated efficiently by equation (\ref{eq:G_MMSE}) or Theorem \ref{thm:sol_P2_i} respectively. Compared to the 2-BCD algorithm, the block updating rule for multiple block coordinate descent method can have various patterns. Here we adopt a very general updating manner called \emph{essentially cyclic rule} \cite{bib:block_coordinate_decent_algorithm}. For essentially cyclic update rule, there exists a positive integer $T$, which is called period, such that each block of variables is updated at least once within any consecutive $T$ updates. The classical Gauss-Seidel method is actually a special case of essentially cyclic rule with its period $T$ being exactly the number of blocks of variables.

For the convergence of essentially cyclic BCD algorithm, when the whole solution sequence converges, the limit of the solution sequence is stationary. In fact, assume that the sequence $\big\{\{\mathbf{F}_i^{(j)}\}_{i=1}^{L},\mathbf{G}^{(j)}\big\}_{j=1}^{\infty}$ converges to the limit point $\bar{\mathbf{X}}\triangleq\big\{\{\bar{\mathbf{F}}_i\}_{i=1}^{L},\bar{\mathbf{G}}\big\}$. Denote $\mathbf{X}=\big\{\{\mathbf{F}_i\}_{i=1}^L,\mathbf{G}\big\}$ and $\mathbf{X}_i$ as the $i$-th block of $\mathbf{X}$, which can be $\mathbf{G}$ or $\mathbf{F}_j$, $\forall j\in\{1,\cdots,L\}$, and $\mathbf{X}_{\bar{i}}$ as the variables other than $\mathbf{X}_i$, i.e. $\mathbf{X}_{\bar{i}}=\{\mathbf{X}\}\backslash\{\mathbf{X}_i\}$. Since $\mathbf{X}_i^{(j+1)}$ minimizes $\mathsf{MSE}$ with given $\{\mathbf{X}_{\bar{i}}^{(j)}\}$, as optimality conditions, we have $\Tr\{\mathbf{\nabla}_{\mathbf{X}_i}\mathsf{MSE}(\mathbf{X}_i^{(j+1)},\mathbf{X}_{\bar{i}}^{(j)})^T\big(\mathbf{X}_i\!-\!\mathbf{X}_i^{(j+1)}\big)\}\geq0$ for any feasible $\mathbf{X}_i$. Since $\{\mathbf{X}_{\bar{i}}^{(j)}\}\rightarrow\bar{\mathbf{X}}_{\bar{i}}$, $\mathbf{X}_i^{(j+1)}\rightarrow\bar{\mathbf{X}}_i$ and $\mathsf{MSE}$ is continuously differentiable, we have $\Tr\{\mathbf{\nabla}_{\mathbf{X}_i}\mathsf{MSE}(\bar{\mathbf{X}})^T\big(\mathbf{X}\!-\!\bar{\mathbf{X}}_i\big)\}\geq0$ for any feasible $\mathbf{X}_i$, $\forall i\in\{1,\cdots,L+1\}$. By summing up all $L+1$ variable blocks, we obtain $\Tr\{\mathbf{\nabla}_{\mathbf{X}}\mathsf{MSE}(\bar{\mathbf{X}})^T\big(\mathbf{X}\!-\!\bar{\mathbf{X}}\big)\}\geq0$ for any feasible $\mathbf{X}$. This suggests that the convergent limit point $\big\{\{\bar{\mathbf{F}}_i\}_{i=1}^{L},\bar{\mathbf{G}}\big\}$ is actually a stationary point of ($\mathsf{P}0$).

However the assumption that the whole solution sequence converges is actually a very strong assumption and cannot be theoretically proved, although extensive numerical results show that this fact seem always hold in practice for our problem. 

For rigorous proof of the convergence to stationary points of BCD algorithms, one usually requires uniqueness of solutions for each block update, as the analysis performed in \cite{bib:Gauss-Seidel, bib:nonlinear_programming_Bertsekas, bib:block_coordinate_decent_algorithm}. Without the uniqueness assumptions, convergence to stationary points is not guaranteed and a counter example has been reported in \cite{bib:BCD_counterExample}, where the solution sequence is always far away from stationary points. In retrospect to Theorem \ref{thm:sol_P2_i}, specific parameter settings ($\mathsf{CASE}$(\uppercase\expandafter{\romannumeral2}) with singular $(\mathbf{A}_{ii}\!+\!\mathbf{C}_i)$ and zero $\mu_i^{\star}$) will result in infinitely many optimal solutions to ($\mathsf{P}2_i^{\prime}$). To overcome this difficulty, we adopt proximal method (Exercise 2.7.1 in \cite{bib:nonlinear_programming_Bertsekas}), which locally modifies the ($\mathsf{P}2_i^{\prime}$) by imposing a squared norm and guarantees that each block update is uniquely solved. 

Specifically, to update the $i$-th beamformer, we consult to the proximal version objective $\mathsf{MSE}\big(\mathbf{f}_i\big|\{\!\mathbf{f}_j\!\}_{j\neq i},\mathbf{g}\big)\!+\!\kappa\|\mathbf{f}_i\!-\!\hat{\mathbf{f}}_i\|_2^2$ of ($\mathsf{P}2_i^{\prime}$) with $\hat{\mathbf{f}}_i$ being the latest value of $\mathbf{f}_i$ until the current update and $\kappa$ being any positive real constant. Thus the problem updating the $i$-th sensor's beamformer is equivalent to ($\mathsf{P}4_i$) as follows
\begin{align}
\!\!\!\!\!\!(\mathsf{P}4^{}_i)\!:\!\underset{\mathbf{f}_i}{\min}\ &\mathbf{f}_i^H\big(\mathbf{A}_{ii}\!\!+\!\!\mathbf{C}_i\!\!+\!\!\kappa\mathbf{I}_{N_iJ_i}\big)\mathbf{f}_i\!\!+\!\!2\Real\big\{\big(\mathbf{q}_i^H\!\!-\!\!\mathbf{g}^H\mathbf{B}_i\!\!-\!\!\kappa\hat{\mathbf{f}}_i^H\big)\mathbf{f}_i\big\}\nonumber\\
s.t.\ & \mathbf{f}_i^H\mathbf{E}_i\mathbf{f}_i\leq P_i.\label{eq:P2prime_constraint}
\end{align}
As shown by the following theorem, the proximal version of any essentially cyclic ($L+1$)-BCD algorithm guarantees monotonic decreasing of objective and stationary-point-achieving convergence of the solution sequence. 

\begin{theorem}
\label{thm:converge_essential_cyclic}
Assume that $\mathbf{K}_i\mathbf{\Sigma}_{\mathbf{s}}\mathbf{K}_i^H\succ0$ or $\mathbf{\Sigma}_i\succ0$, $\forall i\in\{1,\cdots,L\}$. By updating $\mathbf{G}$ and $\mathbf{F}_i$ by solving ($\mathit{P1}$) and ($\mathit{P4}_i$) respectively, any essentially cyclic ($\mathit{L\!+\!1}$)-BCD algorithm generates monotonically decreasing $\mathsf{MSE}$ sequence and the solution sequence has limit points with each limit point being a stationary point of the original problem ($\mathit{P0}$). 
\end{theorem}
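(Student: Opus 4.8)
The plan is to establish the three assertions --- monotone decrease of $\{\mathsf{MSE}^{(j)}\}$, existence of limit points, and stationarity of every limit point --- in that order, with the proximal regularization doing the essential work of turning the possibly non-unique minimizers of $(\mathsf{P}2_i')$ into unique, sufficiently-decreasing block updates. For monotonicity, a $\mathbf{G}$-update solves $(\mathsf{P}1)$ exactly and hence cannot raise $\mathsf{MSE}$. For an $\mathbf{F}_i$-update via $(\mathsf{P}4_i)$, the augmented objective coincides with $\mathsf{MSE}$ at the incumbent $\hat{\mathbf{f}}_i$, so its minimizer $\mathbf{f}_i^{\mathrm{new}}$ obeys $\mathsf{MSE}(\mathbf{f}_i^{\mathrm{new}}\,|\,\cdots)+\kappa\|\mathbf{f}_i^{\mathrm{new}}-\hat{\mathbf{f}}_i\|_2^2\le\mathsf{MSE}(\hat{\mathbf{f}}_i\,|\,\cdots)$, giving the sufficient-decrease inequality $\mathsf{MSE}^{(j)}-\mathsf{MSE}^{(j+1)}\ge\kappa\|\mathbf{X}_i^{(j+1)}-\mathbf{X}_i^{(j)}\|_2^2\ge0$ (with $\mathbf{X}_i$ the block updated at step $j$). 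Since $\mathsf{MSE}=\Tr\{\mathbf{\Phi}\}\ge0$ is bounded below, the sequence $\{\mathsf{MSE}^{(j)}\}$ is nonincreasing and convergent.

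Next I would obtain limit points by reusing the boundedness argument of Theorem \ref{thm:convergence_two_block}: under $\mathbf{K}_i\mathbf{\Sigma}_{\mathbf{s}}\mathbf{K}_i^H\succ0$ or $\mathbf{\Sigma}_i\succ0$ each $\mathbf{E}_i\succ0$, so every feasible set $\mathcal{X}_i$ is compact, the precoder iterates remain bounded, and since $\mathbf{G}$ is the continuous Wiener map (\ref{eq:G_MMSE}) of the precoders the whole sequence is bounded; Bolzano--Weierstrass then supplies convergent subsequences. Telescoping the sufficient-decrease inequalities against $\mathsf{MSE}^{(j)}\downarrow\mathsf{MSE}^{\infty}$ yields $\sum_j\|\mathbf{X}^{(j+1)}-\mathbf{X}^{(j)}\|_2^2<\infty$, hence the asymptotic \emph{no-drift} property $\|\mathbf{X}^{(j+1)}-\mathbf{X}^{(j)}\|_2\to0$. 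The $\mathbf{G}$-block contributes here as well: its Hessian is bounded below by $\sigma_0^2\mathbf{I}$ because $\mathbf{\Sigma}_{\mathbf{n}}\succeq\sigma_0^2\mathbf{I}_M$, so exact minimization of $(\mathsf{P}1)$ is uniformly strongly convex and likewise yields a decrease proportional to $\|\mathbf{G}^{(j+1)}-\mathbf{G}^{(j)}\|_2^2$, so the no-drift property holds across every block.

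The core step is to upgrade the convergent-sequence calculation already sketched in the text so that it holds at a single limit point without assuming the whole sequence converges. Fix a limit point $\bar{\mathbf{X}}$ and a subsequence $\mathbf{X}^{(j_k)}\to\bar{\mathbf{X}}$. By the essentially cyclic rule, within each window $\{j_k,\dots,j_k+T-1\}$ every one of the $L+1$ blocks is updated at least once; because the window has fixed length $T$ and successive iterates no longer drift, \emph{all} iterates in the window converge to $\bar{\mathbf{X}}$. For the iteration at which block $i$ is updated, its unique minimizer (unique thanks to the added $\kappa\mathbf{I}$ or, for $\mathbf{G}$, to $\mathbf{\Sigma}_{\mathbf{n}}\succ0$) satisfies the variational inequality $\Tr\{\nabla_{\mathbf{X}_i}\mathsf{MSE}(\mathbf{X}_i^{\mathrm{new}},\mathbf{X}_{\bar{i}})^T(\mathbf{X}_i-\mathbf{X}_i^{\mathrm{new}})\}+2\kappa\,\Real\{(\mathbf{X}_i^{\mathrm{new}}-\hat{\mathbf{X}}_i)^H(\mathbf{X}_i-\mathbf{X}_i^{\mathrm{new}})\}\ge0$ for all feasible $\mathbf{X}_i$. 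Letting $k\to\infty$, the proximal correction vanishes (as $\mathbf{X}_i^{\mathrm{new}}-\hat{\mathbf{X}}_i\to0$ by no-drift) and continuity of $\nabla\mathsf{MSE}$ leaves $\Tr\{\nabla_{\mathbf{X}_i}\mathsf{MSE}(\bar{\mathbf{X}})^T(\mathbf{X}_i-\bar{\mathbf{X}}_i)\}\ge0$ for every $i\in\{1,\dots,L+1\}$. Since the feasible set is the Cartesian product of the per-block constraints, summing these $L+1$ inequalities gives $\Tr\{\nabla_{\mathbf{X}}\mathsf{MSE}(\bar{\mathbf{X}})^T(\mathbf{X}-\bar{\mathbf{X}})\}\ge0$ for all feasible $\mathbf{X}$, so $\bar{\mathbf{X}}$ is a stationary point of $(\mathsf{P}0)$.

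The main obstacle is exactly this last step. Without the proximal term the $\mathbf{F}_i$-updates need not be unique --- $\mathsf{CASE}$ (\uppercase\expandafter{\romannumeral2}) of Theorem \ref{thm:sol_P2_i} with singular $(\mathbf{A}_{ii}+\mathbf{C}_i)$ and $\mu_i^{\star}=0$ produces infinitely many optimizers --- and then the per-block optimality conditions cannot be shown to survive to a common limit, which is precisely the failure mode of the counterexample in \cite{bib:BCD_counterExample}. The proximal modification resolves this on two fronts simultaneously: it renders each augmented block subproblem strictly convex, so the essentially cyclic convergence machinery of \cite{bib:block_coordinate_decent_algorithm} (which hinges on uniqueness of block minimizers) applies, and it furnishes the sufficient-decrease inequality that forces the no-drift property, which is what allows the finitely many optimality conditions gathered over a length-$T$ window to be evaluated at the single limit $\bar{\mathbf{X}}$. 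A minor point worth recording is that only the precoder blocks require regularization; the $\mathbf{G}$-block already has a unique minimizer, so $(\mathsf{P}1)$ is used unchanged.
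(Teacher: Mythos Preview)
Your proposal is correct, and it reaches the same conclusion by a somewhat different (and in places cleaner) route than the paper. The paper follows the classical Bertsekas-style argument: given a subsequence $\mathbf{x}^{(k_j)}\to\bar{\mathbf{x}}$, it proves $\mathbf{x}^{(k_j+1)}\to\bar{\mathbf{x}}$ block by block --- for a $\mathbf{G}$-update by continuity of the Wiener map and uniqueness of the solution to $(\mathsf{P}1)$, and for an $\mathbf{F}_i$-update by a contradiction that forces a strictly convex quadratic in a scalar parameter to be constant on an interval --- and then iterates this $T-1$ times across the essentially cyclic window. You instead extract a quantitative sufficient-decrease inequality from the proximal term (and, for $\mathbf{G}$, from the uniform lower bound $\mathbf{\Sigma}_{\mathbf{n}}\succeq\sigma_0^2\mathbf{I}$), telescope it to get $\|\mathbf{X}^{(j+1)}-\mathbf{X}^{(j)}\|_2\to0$ globally, and then use this no-drift property to push the per-block optimality conditions across an entire length-$T$ window to the common limit $\bar{\mathbf{X}}$ in one stroke. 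Your approach is more uniform across blocks and avoids the contradiction machinery; the paper's approach is slightly less demanding on the $\mathbf{G}$-block in that it only uses continuity and uniqueness rather than a strong-convexity constant, but since $\sigma_0^2>0$ is available here this costs you nothing. One small technical point worth making explicit in your write-up: the position within the window at which block $i$ is updated may vary with $k$, but since the window has fixed length $T$ and all $T$ iterates converge to $\bar{\mathbf{X}}$ by no-drift, the limit of the optimality condition is insensitive to this.
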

\begin{proof}
See Appendix \ref{subsec:thm_converge_essential_cyclic}.
\end{proof}

Note the solution to ($\mathsf{P}4_i$) can be easily obtained by Theorem 3 with the terms $\big(\mathbf{A}_{ii}\!+\!\mathbf{C}_i\big)$ and $\big(\mathbf{B}_i^H\mathbf{g}\!-\!\mathbf{q}_i\big)$ being replaced by $\big(\mathbf{A}_{ii}\!+\!\mathbf{C}_i\!+\!\kappa\mathbf{I}_{N_iJ_i}\big)$ and $\big(\mathbf{B}_i^H\mathbf{g}\!-\!\mathbf{q}_i\!+\!\kappa\hat{\mathbf{f}}_i\big)$ respectively and no additional complexity is required.

Recall the layered-BCD algorithm discussed in previous subsection, when the inner-loop is performed by small number of iterations, it actually reduces to a specialized essentially cyclic BCD algorithm. One special case is the iterative algorithm proposed in \cite{bib:sensor_network_Hamid} whose inner-loop updates each beamformer for once. According to the above theorem, by updating each beamformer with the proximal method, the convergence to stationary points can be guaranteed.

One drawback of the above proximal update is its slow convergence rate, as will be shown in Section \ref{sec:numerical results}. However this shortcoming can be well compensated by the following acceleration scheme in the next subsection.

\subsection{Acceleration by Approximation}
\label{subsec:approximation}

The aforementioned ($L\!+\!1$)-BCD algorithm can be accelerated by introducing approximation when updating single beamformer $\mathbf{F}_i$ in ($\mathsf{P}2_i^{\prime}$). In addition to setting the $\{\mathbf{F}_j\}_{j\neq i}$ as known and fixed, we assume that the term $\mathbf{A}_{ii}\mathbf{f}_i$ is also known by leveraging the value of $\mathbf{f}_i$ in the previous updates. In other words, we define $\hat{\mathbf{q}}_i=\sum_{j=1,j\neq i}^{L}\mathbf{A}_{ij}\mathbf{f}_j\!+\!\mathbf{A}_{ii}\hat{\mathbf{f}}_i\!=\!\mathbf{q}_i\!+\!\mathbf{A}_{ii}\hat{\mathbf{f}}_i$ with $\hat{\mathbf{f}}_i$ being the latest value of $\mathbf{f}_i$. Thus to update $\mathbf{f}_i$ we solve the approximate version ($\mathsf{P}5_i$) of ($\mathsf{P}2_i^{\prime}$) as follows
\begin{subequations}
\begin{align}
\!\!\!\!\!\!\!\!\!\!(\mathsf{P}5_i):\underset{\mathbf{f}_i}{\min}\ &\mathbf{f}_i^H\mathbf{C}_i\mathbf{f}_i+2\Real\{\hat{\mathbf{q}}_i^H\mathbf{f}_i\}\!\!-\!\!2\Real\{\mathbf{g}^H\mathbf{B}_i\mathbf{f}_i\}\\
s.t.\ & \mathbf{f}_i^H\mathbf{E}_i\mathbf{f}_i\leq P_i.\label{eq:P2prime_constraint}
\end{align}
\end{subequations}
The problem ($\mathsf{P}5_i$) can still be efficiently solved by Theorem \ref{thm:sol_P2_i}. Interestingly enough, this approximation can significantly improve the convergence rate of the cyclic-BCD procedure!

Actually similar idea appears in \cite{bib:MIMO_multiuser}, where the precoders of multiusers is updated in a cyclic manner. In Implementation 2 (Table II) of \cite{bib:MIMO_multiuser}, with others being fixed, one separate precoder is updated by minimizing the total $\mathsf{MSE}$ function with some terms of the to-be-updated precoder approximated by previous values. As reflected by the extensive numerical results in \cite{bib:MIMO_multiuser}, this approximated BCD implementation has surprisingly faster convergence compared to the original one (Implementation I in Table I) in \cite{bib:MIMO_multiuser}.  

The surprisingly fast convergence of the approximate update inspires us the idea that it can become perfect complement of the aforementioned proximal update. In implementation, ($L\!+\!1$)-BCD algorithms can be performed in an approximate-proximal manner---in the first few outer-loop iterations we run the approximate update and then convert to proximal update in the subsequent updates. This approximate-proximal combination exhibits fast convergence and also guarantees stationary-points-achieving convergence as shown previously.


\section{Numerical Results}
\label{sec:numerical results}

In this section, numerical results are presented to verify and compare the performance of the proposed algorithms.  

In the following experiments, a wireless sensor network with $L=3$ sensors is considered. The antenna numbers of the sensors and the fusion center are set as $N_1=3, N_2=4, N_3=5$ and $M=4$ respectively. All observation matrices $K_i$ are set as identity matrices. The source signal $\mathbf{s}$ has dimension $K=3$ with zero mean, unit-power and uncorrelated components. 
The observation noise at each sensor is colored and has covariance matrix $\mathbf{\Sigma}_i=\sigma_i^2\mathbf{\Sigma}_{0,i}$, $i\in\{1,\cdots,L\}$, where the $J_i\times J_i$ matrix $\mathbf{\Sigma}_{0,i}$ has the Toeplitz structure with its ($j,k$)-th element $[\mathbf{\Sigma}_{0,i}]_{j,k}\!=\!\rho^{|k\!-\!j|}$. The parameter $\rho$ is set as $0.5$ for all sensors in our test. The transmission power and observation noise at each sensor are set as $P_1=2$, $P_2=2$, $P_3=3$, $\sigma_1^{-2}=6$dB, $\sigma_2^{-2}=7$dB and $\sigma_3^{-2}=8$dB, respectively.

In the test of each algorithm, channel noise level increases from $\mathsf{SNR}_0=0$dB to $18$dB. For one specific channel noise level, $500$ channel realizations $\{\mathbf{H}_1,\mathbf{H}_2,\mathbf{H}_3\}$ are randomly generated with each matrix entry following standard complex circular Gaussian distribution $\mathcal{CN}(0,1)$. The mean square error averaged over all $500$ random channel realizations are evaluated as a function of the number of (outer-loop) iterations and the channel SNR.

2-BCD algorithm is implemented by utilizing CVX(with SDPT3 solver) to solve its subproblem ($\mathsf{P}2$). For the essentially cyclic ($L\!+\!1$)-BCD algorithm, here we test two special cases: i) the layered BCD algorithm with finite inner-loop iterations, where the inner-loop cyclically updates each beamformer for two times; ii) ($L\!+\!1$)B-FG algorithm, where beamformers are cyclically updated with each $\mathbf{F}_i$'s update followed by the calibration of $\mathbf{G}$. That means the variables are updated in an order of $\mathbf{F}_1,\mathbf{G},\mathbf{F}_2,\mathbf{G},\cdots$. In one outer-loop it updates each $\mathbf{F}_i$ once and $\mathbf{G}$ for $L$ times. The performance of these two cases are presented in Figure \ref{fig:MSE_alg1_vs_alg2_layered} and \ref{fig:MSE_alg1_vs_alg2} respectively. The 2-BCD algorithm is plotted in each figure to serve as a benchmark. On average, the layered-BCD algorithm with finite inner-loop iteration and the ($L\!+\!1$)B-FG algorithm converges in 30-40 outer-loop iterations to the identical MSE as that of the 2-BCD algorithm.  

\begin{figure}
\centering
\includegraphics[height=2.7in,width=3.7in]{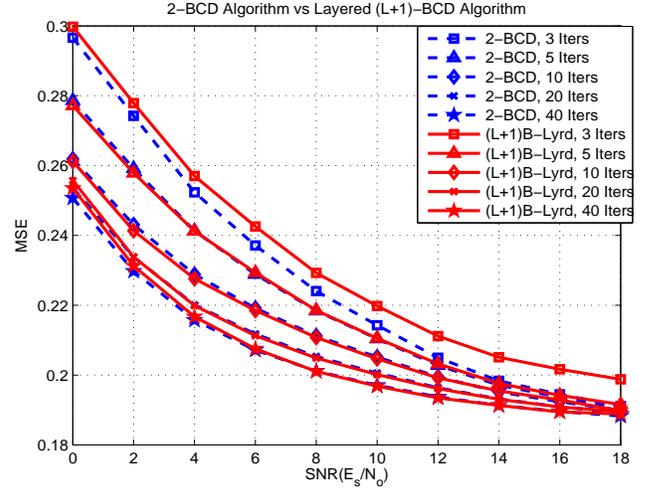}
\caption{$\mathsf{MSE}$ Performance of 2-BCD  v.s. Layered ($L\!+\!1$)-BCD (with 2 inner-loop iterations) Algorithms}
\label{fig:MSE_alg1_vs_alg2_layered} 
\end{figure}

\begin{figure}
\centering
\includegraphics[height=2.7in,width=3.7in]{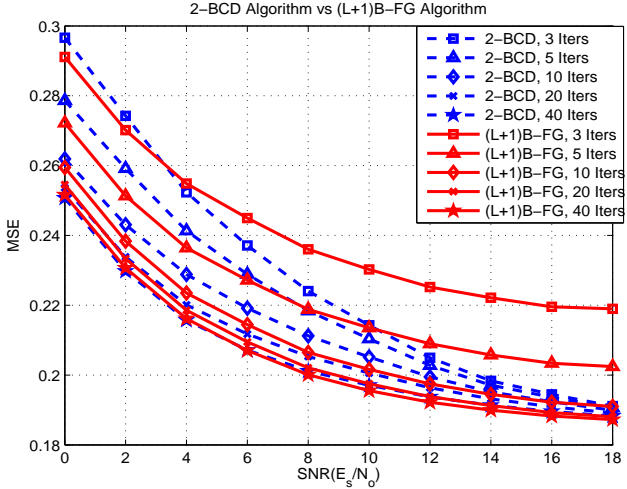}
\caption{$\mathsf{MSE}$ Performance of 2-BCD v.s.($L\!+\!1$)B-FG Algorithms}
\label{fig:MSE_alg1_vs_alg2} 
\end{figure}

The approximate and proximal version (with $\kappa=1$) of ($L\!+\!1$)B-FG algorithm are also tested and presented in Figure \ref{fig:MSE_alg1_vs_alg2_approx} and \ref{fig:MSE_alg1_vs_alg2_prox} respectively. As shown in Figure \ref{fig:MSE_alg1_vs_alg2_approx}, the performance of approximate method is surprisingly fast and exhibits excellent performance within only 3 to 5 outer-loop iterations.
Comparatively the proximal method, although whose convergence to stationary points can be proved, exhibits a much slower convergence than other algorithms, as shown in Figure \ref{fig:MSE_alg1_vs_alg2_prox}.

\begin{figure}
\centering
\includegraphics[height=2.7in,width=3.7in]{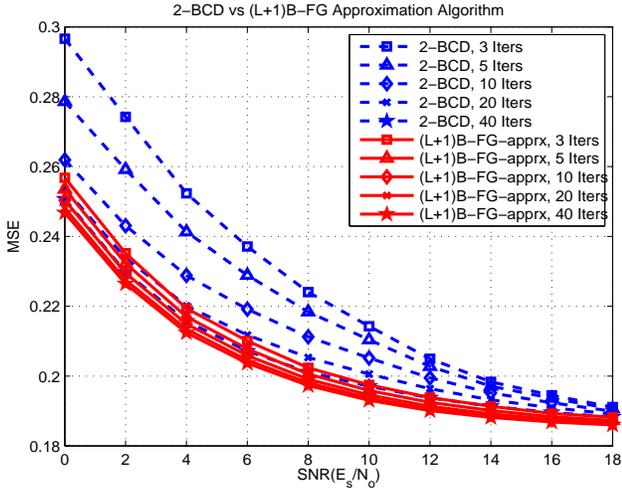}
\caption{$\mathsf{MSE}$ Performance of 2-BCD v.s. Approximate ($L\!+\!1$)B-FG}
\label{fig:MSE_alg1_vs_alg2_approx} 
\end{figure}

\begin{figure}
\centering
\includegraphics[height=2.7in,width=3.7in]{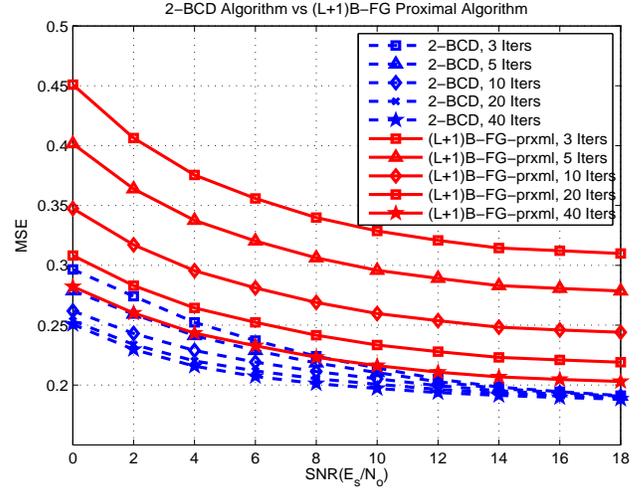}
\caption{$\mathsf{MSE}$ Performance of 2-BCD v.s. Proximal ($L\!+\!1$)B-FG}
\label{fig:MSE_alg1_vs_alg2_prox} 
\end{figure}

In Figure \ref{fig:MSE_alg1_vs_alg2_ApproxProx} the approximate-proximal version of ($L\!+\!1$)B-FG is tested. Here in the first 10 outer-loop iterations, approximate version of ($L\!+\!1$)B-FG is performed and after that the proximal method is used. As shown in the figure, this combination scheme inherits the fast convergence rate of approximate method and, as proved previously, guarantees convergence to stationary points.

\begin{figure}
\centering
\includegraphics[height=2.7in,width=3.7in]{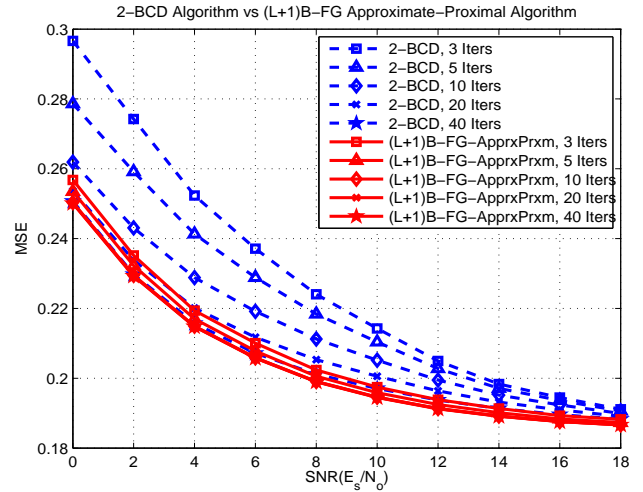}
\caption{$\mathsf{MSE}$ Performance of 2-BCD v.s. Approximate-Proximal ($L\!+\!1$)B-FG}
\label{fig:MSE_alg1_vs_alg2_ApproxProx} 
\end{figure}

Next, we take a close look at the convergence behaviors of these algorithms. We set $\mathsf{SNR}_0=2$dB and fix the channel by a randomly-generated realization. We randomly generate 10 feasible initial points. We run 2-BCD, ($L\!+\!1$)B-FG, proximal ($L\!+\!1$)B-FG and approximate-proximal ($L\!+\!1$)B-FG algorithms from these 10 random initial points and represent the resultant $\mathsf{MSE}$ itineraries  in Figures \ref{fig:itinerary_alg1_vs_alg2}-\ref{fig:itinerary_alg1_vs_alg2_approx_prox}. These plots clearly demonstrate that these algorithms are insensitive to initial points and exhibit rather stable converged MSE from different startings. As shown in the figures, different algorithms with random initials finally converge to identical MSE value with different
convergence rates. Proximal method has an obviously slower convergence and the approximate-proximal method exhibits fast convergence in the first 3 outer-loop iterations, which coincides with the observations presented in previous figures.


\begin{figure}
\centerline{
\includegraphics[height=2.7in,width=3.7in]{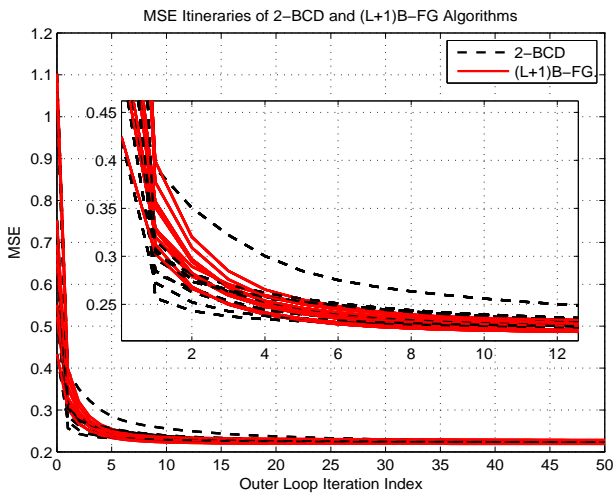}
}
\caption{MSE Itineraries of 2-BCD v.s. ($L\!+\!1$)B-FG Algorithm}
\label{fig:itinerary_alg1_vs_alg2} 
\end{figure}

\begin{figure}
\centerline{
\includegraphics[height=2.7in,width=3.7in]{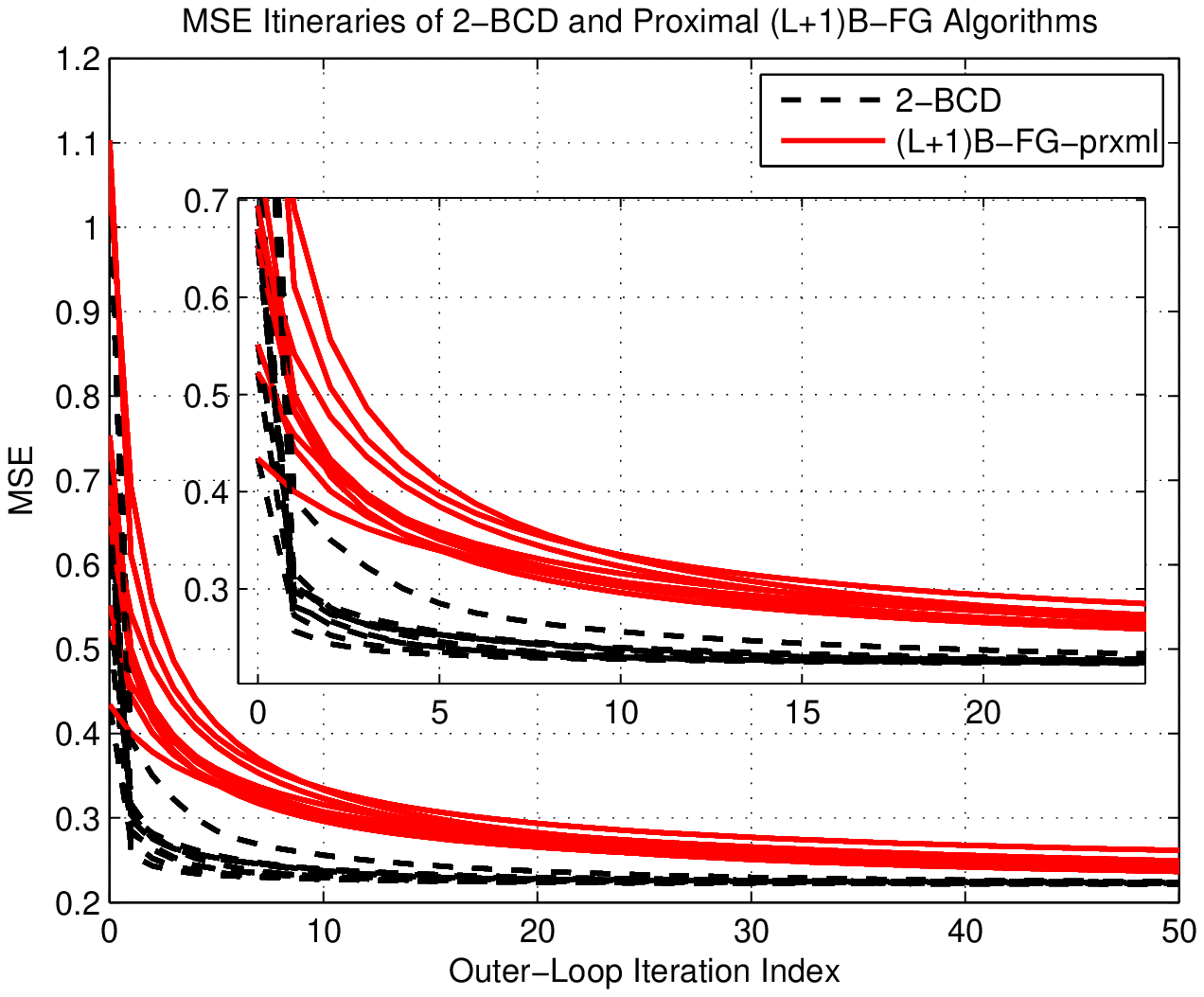}
}
\caption{MSE Itineraries of 2-BCD v.s. Proximal ($L\!+\!1$)B-FG Algorithm}
\label{fig:itinerary_alg1_vs_alg2_proximal} 
\end{figure}

\begin{figure}
\centering
\includegraphics[height=2.7in,width=3.7in]{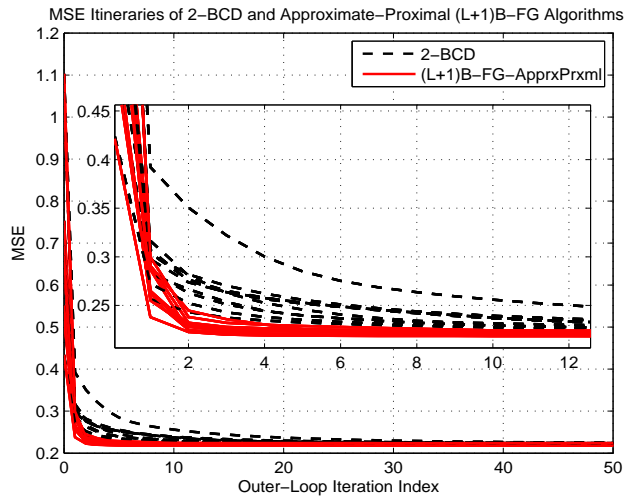}
\caption{MSE Itineraries of 2-BCD  v.s. Approximate-Proximal ($L\!+\!1$)B-FG Algorithm}
\label{fig:itinerary_alg1_vs_alg2_approx_prox} 
\end{figure}

We present in Table \ref{tab:runningtime} the average MATLAB running time for different algorithms (running in a regular laptop). For simplicity, we focus on homogeneous sensor network, where each sensor has the same number of antennae and $J_i=N_i$. Different values of $K$ (size of the source vector) and $L$ (number of sensors) are tested to take into account different problem sizes. 
The algorithms are run multiple times and the average MATLAB running time per outer-loop iteration is recorded. For the 2-BCD algorithm, CVX is utilized to solve its subproblem and the solver SDPT3 is chosen. In Table \ref{tab:runningtime}, the average running time of 2-BCD, ($L\!+\!1$)B-FG and layered ($L\!+\!1$)-BCD algorithm is presented. Note that the approximate, proximal and approximate-proximal ($L\!+\!1$)B-FG algorithms have the same complexity as that of ($L\!+\!1$)B-FG algorithm. The analytic solutions obtained in Theorem \ref{thm:sol_P2_i} entitles the essentially cyclic ($L\!+\!1$)-BCD algorithm and its variants high efficiency for implementation. However it should be pointed out that the 2-BCD method still has great significance in decentralized optimization for our problem. As proved in Theorem \ref{thm:P_2_convexity}, its subproblem ($\mathsf{P}2$) is convex. In fact, by taking advantage of this key property and utilizing multiplier method, the problem ($\mathsf{P}0$) can be solved under the 2-BCD framework, where ($\mathsf{P}2$) is solved in a highly distributed manner with each sensor updating its own beamformer.

\begin{table}
\caption{\small{MATLAB Running Time Per (Outer-Loop) Iteration}}
\centering
\begin{tabular}{|c|c|c|c|c|c|}
\hline
\backslashbox{Dim.}{$L$}& Algorithms &$L=2$&$L=4$&$L=6$&$L=8$ \\
\hline
$K=1$ & 2-BCD & 0.2167s & 0.2490 & 0.2987s & 0.3500s \\
\cline{2-6}
$M=3$ &($L\!+\!1$)B-FG & 0.0026s & 0.0066s & 0.0120s & 0.0189s \\
\cline{2-6}
$N_i=3$ & Lay. BCD & 0.0031s & 0.0094s & 0.0181s & 0.0301s \\
\hline
$K=3$ & 2-BCD & 0.2432s & 0.3068s & 0.3636s & 0.4285s \\ \cline{2-6}
$M=3$ &($L\!+\!1$)B-FG & 0.0056s & 0.0159s & 0.0328s & 0.0560s \\ \cline{2-6}
$N_i=3$ & Lay. BCD  & 0.0087s & 0.0241s & 0.0493s & 0.0839s \\
\hline
$K=6$ & 2-BCD & 0.2529s & 0.3786s & 0.5861s & 0.7526s \\ \cline{2-6}
$M=6$ &($L\!+\!1$)B-FG & 0.0075s & 0.0203s & 0.0397s & 0.0664s \\ \cline{2-6}
$N_i=6$ & Lay. BCD & 0.0116s & 0.0319s & 0.0622s & 0.1031s \\
\hline
$K=9$ & 2-BCD & 0.4352s & 0.7956s & 1.1401s & 1.9593s \\ \cline{2-6}
$M=9$ &($L\!+\!1$)B-FG & 0.0120s & 0.0302s & 0.0557s & 0.0902s \\ \cline{2-6}
$N_i=9$ & Lay. BCD & 0.0205s & 0.0514s & 0.0928s & 0.1467s \\
\hline
\end{tabular}
Notes: (i) layered-BCD is run with 2 inner-loop iterations.\\
\qquad\qquad\qquad (ii) SDPT3 solver of CVX is chosen to implement 2-BCD.
\label{tab:runningtime}
\end{table}

\section{Conclusion}
\label{sec:conclusion}

In this paper we study the joint transceiver design problem for the wireless sensor network under the $\mathsf{MSE}$ criterion. Due to the nonconvexity of the original problem, block coordinate descent methods are adopted. A two-block coordinate descent method is first proposed, which decomposes the original problem into two subproblems and alternatively optimizes the linear postcoder and the linear precoders jointly. This 2-BCD algorithm guarantees convergence (of its solution limit points) to stationary points. We also completely solve the one single beamformer's optimization problem with one power constraint. This conclusion gives birth to highly efficient multiple block coordinate descent methods. We prove the fact that updating the separate beamformer or the linear receiver in any essentially cyclic rule with proximal method can guarantee the convergence to stationary points. Moreover combining approximation with the proximal method significantly improves the convergence rate  while maintaining its strong convergence and high efficiency. Extensive numerical results are provided to verify our findings. 


\appendix

\subsection{Proof of Theorem \ref{thm:sol_P2_i}}
\label{subsec:appendix_thm_sol_P2_i}

\begin{proof}
The assumption implies $\big(\mathbf{K}_i\mathbf{\Sigma}_{\mathbf{s}}\mathbf{K}_i^H\!\!+\!\!\mathbf{\Sigma}_i\big)\succ0$. Therefore $\mathbf{E}_i=\big(\mathbf{K}_i\mathbf{\Sigma}_{\mathbf{s}}\mathbf{K}_i^H\!\!+\!\!\mathbf{\Sigma}_i\big)\!\otimes\!\mathbf{I}_{N_i}\succ0$. We introduce the following notations 
\begin{subequations}
\label{eq:new_def}
\begin{align}
\widetilde{\mathbf{f}}&\triangleq\mathbf{E}_i^{\frac{1}{2}}\mathbf{f};\label{eq:fi_redef}\\
\mathbf{M}_i&\triangleq\mathbf{E}_i^{-\frac{1}{2}}\big(\mathbf{A}_{ii}\!+\!\mathbf{C}_i\big)\mathbf{E}_i^{-\frac{1}{2}}=\mathbf{U}_i\mathbf{\Lambda}_i\mathbf{U}_i^H; \label{eq:M_eigendecomp}\\
\mathbf{b}_i&\triangleq \mathbf{E}_i^{-\frac{1}{2}}\big(\mathbf{B}_i^H\mathbf{g}-\mathbf{q}_i\big);\label{eq:b_def}\\
\mathbf{p}_i&\triangleq\mathbf{U}_i^H\mathbf{b}_i;\label{eq:p_i_definition}
\end{align}
\end{subequations}
where the $j$-th column $\mathbf{u}_{i,j}$ of $\mathbf{U}_i$ is the eigenvector associated with the eigenvalue $\lambda_{i,j}\triangleq[\mathbf{\Lambda}_i]_{j,j}$. Without loss of generality, we assume that the eigenvalues of $\mathbf{M}_i$ are arranged in a decreasing order and that $\mathbf{M}_i$ has rank $r_i$, $r_i\leq J_iN_i$. In other words $\lambda_{i,1}\geq\cdots\geq\lambda_{i,r_i}>\lambda_{i,r_i+1}=\cdots=\lambda_{i,J_iN_i}=0$.

Then the problem ($\mathsf{P}2_i^{\prime}$) is rewritten as
\begin{subequations}
\begin{align}
(\mathsf{P}3_i):\underset{\widetilde{\mathbf{f}}_i}{\min}\ &\widetilde{\mathbf{f}}_i^H\mathbf{M}_i\widetilde{\mathbf{f}}_i-2\Real\{\mathbf{b}_i^H\widetilde{\mathbf{f}}_i\}, \\
s.t.\ & \|\widetilde{\mathbf{f}}_i\|^2_2\leq P_i.
\end{align}
\end{subequations} 

Since $\mathbf{M}_i$ is positive semidefinite, ($\mathsf{P}3_i$) is convex and is obviously strictly feasible. Thus solving ($\mathsf{P}3_i$) is equivalent to solving its KKT conditions: 
\begin{subequations}
\begin{align}
\big(\mathbf{M}_i+\mu_i\mathbf{I}\big)\widetilde{\mathbf{f}}_i
&=\mathbf{b}_i;\label{eq:P2_i_KKT_a}\\
\|\widetilde{\mathbf{f}}_i\|^2_2&\leq P_i;\label{eq:P2_i_KKT_b}\\
\mu_i\big(\|\widetilde{\mathbf{f}}_i\|^2_2-P_i\big)&=0;\label{eq:P2_i_KKT_c}\\
\mu_i &\geq0. \label{eq:P2_i_KKT_d}
\end{align}\label{eq:P2_i_KKT}
\end{subequations}
The Lagrangian multiplier $\mu_i$ should be either positive or zero, our next discussion focuses on identifying the positivity of $\mu_i$.

Assume that $\mu_i>0$, then $\big(\mathbf{M}_i+\mu_i\mathbf{I}\big)$ is strictly positive definite and thus invertible. Consequently $\widetilde{\mathbf{f}}_i=\big(\mathbf{M}_i+\mu_i\mathbf{I}\big)^{-1}\mathbf{b}_i$. By the slackness condition (\ref{eq:P2_i_KKT_c}), the power constraint (\ref{eq:P2_i_KKT_b}) should be active. Plugging $\widetilde{\mathbf{f}}_i$ into (\ref{eq:P2_i_KKT_b}) and noting the eigenvalue decomposition in (\ref{eq:M_eigendecomp}), we get
\begin{align}
\!\!\!\!\!\!\|\widetilde{\mathbf{f}}_i\|^2\!=\!\mathbf{b}_i^H\!\mathbf{U}_i\big(\mathbf{\Lambda}_i\!+\!\mu_i\mathbf{I}\big)^{\!-\!2}\mathbf{U}_i^H\!\mathbf{b}_i\!=\!P_i. \label{eq:KKT_analysis_1}
\end{align}
By the definition of $\mathbf{p}_i$ in (\ref{eq:p_i_definition}), we rewrite (\ref{eq:KKT_analysis_1}) as
\begin{align}
\!\!\!\!\!\!\|\widetilde{\mathbf{f}}_i\|^2=g_i(\mu_i)=\sum_{k=1}^{r_i}\frac{|p_{i,k}|^2}{(\lambda_{i,k}+\mu_i)^2}\!+\!\!\!\!\sum_{k=r_i+1}^{J_iN_i}\!\!\!\frac{\ |p_{i,k}|^2}{\mu_i^2}=P_i.\label{eq:KKT_analysis_2}
\end{align}
Note that here $g_i(\mu_i)$ is a positive, continuous and strictly decreasing function in $\mu_i$. 

To identify the positivity of $\mu_i$, the following different cases are considered:
\newline
\underline{{$\mathsf{CASE}$ (\uppercase\expandafter{\romannumeral1})}}--- $\mu_i^{\star}>0$
\indent This case further involves two subcases:
\newline\indent \underline{case {\expandafter{\romannumeral1}})}--- $\exists k\in\{r_i+1,\cdots,J_iN_i\}$ s.t. $|p_{i,k}|\neq0$:

In this case, it is easily seen that $g_i(\mu_i)\rightarrow+\infty$ when $\mu_i\rightarrow0^+$, so $g_i(\mu_i)$ has the range of $\big(0,\infty\big)$ for positive $\mu_i$. So in case {\expandafter{\romannumeral1}}) there always exists a unique positive $\mu_i$ satisfying (\ref{eq:KKT_analysis_2}). Suppose that the unique solution of (\ref{eq:KKT_analysis_2}) is $\mu_i^{\star}$. Plugging $\mu_i^{\star}$ back into the KKT condition (\ref{eq:P2_i_KKT_a}), we obtain the optimal solution $\widetilde{\mathbf{f}}_i^{\star}$ as
\begin{align}
\widetilde{\mathbf{f}}_i^{\star}=\big(\mathbf{M}_i+\mu_i^{\star}\mathbf{I}\big)^{-1}\mathbf{b}_i.\label{eq:f_i_opt}
\end{align}
Plugging (\ref{eq:new_def}) into the above, (\ref{eq:opt_f_i_1}) is obtained. It is easily verified that the  $\mu_i^{\star}$ and $\mathbf{f}_i^{\star}$ in (\ref{eq:opt_f_i_1}) satisfy all the KKT conditions in (\ref{eq:P2_i_KKT}) and therefore is the optimal solution to ($\mathsf{P}2'_i$).
\newline\indent \underline{case {\expandafter{\romannumeral2}})} $\sum_{k=r_i+1}^{J_iN_i}|p_{i,k}|^2\!=\!0$ and $\sum_{k\!=\!1}^{r_i}\!\!\!\frac{\ |p_{i,k}|^2}{\lambda_{i,k}^2}\!>\!P_i$:
\newline\indent In this case, the second part in the summation of $g_i(\mu_i)$ in (\ref{eq:KKT_analysis_2}) vanishes and $g_i(\mu_i)$ has the bounded range $\Big(0,\sum_{k=1}^{r_i}\!\!\!\frac{\ |p_{i,k}|^2}{\lambda_{i,k}^2}\Big]$, with its maximum value achieved at $\mu_i=0$. When $\sum_{k=1}^{r_i}{|p_{i,k}|^2}{\lambda_{i,k}^2}>P_i$, a positive $\mu_i^{\star}$ satisfying (\ref{eq:KKT_analysis_2}) still exists and is unique. Consequently, the optimal solution $\mathbf{f}_i^{\star}$ can be determined by (\ref{eq:f_i_opt}) as in the subcase \expandafter{\romannumeral1}). 
\newline
\underline{$\mathsf{CASE}$ (\uppercase\expandafter{\romannumeral2})}--- $\sum_{k=r_i+1}^{J_iN_i}|p_{i,k}|^2\!=\!0$ and $\sum_{k\!=\!1}^{r_i}\!\!\!\frac{\ |p_{i,k}|^2}{\lambda_{i,k}^2}\!\leq\!P_i$

In this case, a positive $\mu_i$ satisfying KKT conditions does not exist any more, and $\mu_i^{\star}=0$. As such, the optimal solution $\mathbf{f}_i^{\star}$ should satisfy (\ref{eq:P2_i_KKT_a}): 
\begin{align}
\mathbf{M}_i\widetilde{\mathbf{f}}_i=\mathbf{b}_i.\label{eq:linear_sys}
\end{align}

We now claim that the above equation (\ref{eq:linear_sys}) has a feasible solution. Indeed, this equation is solvable if and only if the right hand side $\mathbf{b}_i$ belongs to the column space $\mathcal{R}\big(\mathbf{M}_i\big)$. Recall that $\mathbf{M}_i$ is Hermitian and has rank $r_i$; so $\mathcal{R}\big(\mathbf{M}_i\big)=\mathsf{span}\big(\mathbf{u}_{i,1},\cdots,\mathbf{u}_{i,r_i}\big)$ and the null space of $\mathbf{M}_i$ satisfies $\mathcal{N}\big(\mathbf{M}_i\big)=\mathcal{R}^{\perp}\big(\mathbf{M}_i\big)=\mathsf{span}\big(\mathbf{u}_{i,r_i\!+\!1},\cdots,\mathbf{u}_{i,J_iN_i}\big)$. In fact, $\mathbb{C}^{J_iN_i}=\mathcal{R}\big(\mathbf{M}_i\big)\oplus\mathcal{N}\big(\mathbf{M}_i\big)$. Invoking the assumption of CASE (\uppercase\expandafter{\romannumeral2}) that $|p_{i,k}|=0$, $\forall k\in\{r_i+1,\cdots,J_iN_i\}$ and the definition of $\mathbf{p}_i$, we obtain $p_{i,k}=\mathbf{u}_{i,k}^H\mathbf{b}_i$, $\forall k\in\{r_i+1,\cdots,J_iN_i\}$. Actually this implies $\mathbf{b}_i\in\mathcal{N}^{\perp}\big(\mathbf{M}_i\big)=\mathcal{R}\big(\mathbf{M}_i\big)$ and thus the consistency (i.e. the feasibility) of (\ref{eq:linear_sys}) is guaranteed.

Next we proceed to analytically identify one special feasible solution of (\ref{eq:linear_sys}). Eigenvalue decomposing $\mathbf{M}_i$, (\ref{eq:linear_sys}) can be equivalently written as
\begin{align}
\mathbf{\Lambda}_i\mathbf{U}_i^H\widetilde{\mathbf{f}}_i=\mathbf{p}_i.\label{eq:linear_sys_2}
\end{align}
Let $\bar{\mathbf{\Lambda}}_i$ represent the top-left $r_i\times r_i$ sub-matrix of $\mathbf{\Lambda}_i$, i.e. $\mathbf{\Lambda}_i=\mathsf{diag}\big\{\bar{\mathbf{\Lambda}}_i,\mathbf{O}_{J_iN_i-r_i}\big\}$. Let $\bar{\mathbf{U}}_i$ and $\tilde{\mathbf{U}}_i$ represent the left-most $r_i$ columns and the remaining columns of $\mathbf{U}_i$ respectively, i.e. $\mathbf{U}_i=\big[\bar{\mathbf{U}}_i,\tilde{\mathbf{U}}_i\big]$. We can then simplify (\ref{eq:linear_sys_2}) to
\begin{align}
\bar{\mathbf{\Lambda}}_i\bar{\mathbf{U}}_i^H\widetilde{\mathbf{f}}_i=\mathbf{p}_i.\label{eq:linear_sys_3}
\end{align}

Since the columns of $\mathbf{U}_i$ form a set of orthonormal basis for $\mathbb{C}^{J_iN_i}$, $\widetilde{\mathbf{f}}_i$ can be expressed via columns of $\mathbf{U}_i$ as $\widetilde{\mathbf{f}}_i=\sum_{k=1}^{J_iN_i}\alpha_{i,k}\mathbf{u}_{i,k}$. Noticing the key fact that $\bar{\mathbf{U}}_i^H\mathbf{u}_{i,k}=\mathbf{0}$, $\forall k\in\{r_i+1,\cdots,J_iN_i\}$, we know that the values of $\{\alpha_{i,r_i+1},\cdots,\alpha_{i,J_iN_i}\}$ have no impact on (\ref{eq:linear_sys_3}) and can therefore be safely set to zeros to save energy. As for $\alpha_{i,k}$, $\forall k\in\{1,\cdots,r_i\}$, we  substitute $\widetilde{\mathbf{f}}_i=\sum_{k=1}^{r_i}\alpha_{i,k}\mathbf{u}_{i,k}$ into (\ref{eq:linear_sys_3}) and obtain 
\begin{align}
\alpha_{i,k}=\lambda_{i,k}^{-1}p_{i,k},\ \ \ \ \forall k \in\{1, \cdots, r_i\}.\label{eq:coefficient_alpha}
\end{align}
Summarizing the above analysis, the optimal solution $\widetilde{\mathbf{f}}_i^{\star}$ to ($\mathsf{P}3_i$) is given by
\begin{align}
\widetilde{\mathbf{f}}_i^{\star}=\mathbf{U}_i\mathbf{\Lambda}_i^{\dagger}\mathbf{U}_i^H\mathbf{b}_i,\label{eq:f_i_opt_mu_i0}
\end{align}
with $\mathbf{\Lambda}_i^{\dagger}$ being the Moore-Penrose pseudoinverse of $\mathbf{\Lambda}_i$ given as $\mathsf{diag}\big\{\bar{\mathbf{\Lambda}}_i^{-1},\mathbf{O}_{J_iN_i\!-\!r_i}\big\}$. Matrix theory suggests that an arbitrary matrix $\mathbf{X}$ with its singular value decomposition (SVD) given by $\mathbf{X}=\mathbf{U}_{\mathbf{X}}\mathbf{\Lambda}_{\mathbf{X}}\mathbf{V}^H_{\mathbf{X}}$ has its unique Moore-Penrose pseudoinverse $\mathbf{X}^{\dagger}=\mathbf{V}_{\mathbf{X}}\mathbf{\Lambda}_{\mathbf{X}}^{\dagger}\mathbf{U}^H_{\mathbf{X}}$, where $\mathbf{U}_{\mathbf{X}}$ and $\mathbf{V}_{\mathbf{X}}$ are left and right singular square matrices, respectively, and $\mathbf{\Lambda}_{\mathbf{X}}$ is a diagonal matrix with appropriate dimensions. Hence, (\ref{eq:f_i_opt_mu_i0}) can be equivalently written as
\begin{align}
\widetilde{\mathbf{f}}_i^{\star}=\mathbf{M}_i^{\dagger}\mathbf{b}_i.\label{eq:f_i_opt_MPinverse}
\end{align}

Obviously $\mu_i^{\star}=0$, and $\mu_i^{\star}$ and $\widetilde{\mathbf{f}}_i^{\star}$ satisfy the KKT conditions (\ref{eq:P2_i_KKT_a}), (\ref{eq:P2_i_KKT_c}) and (\ref{eq:P2_i_KKT_d}). What remains to be shown is that $\widetilde{\mathbf{f}}_i^{\star}$  satisfies the power constraint. We verify this using (\ref{eq:coefficient_alpha}) and get 
\begin{align}
\|\widetilde{\mathbf{f}}_i^{\star}\|^2=\sum_{k=1}^{r_i}|\alpha_{i,k}|^2=\sum_{k=1}^{r_i}\frac{\ |p_{i,k}|^2}{\lambda_{i,k}^2}\leq P_i,
\label{eq:solution_case3}
\end{align}
where the inequality in the above follows the assumption of CASE (\uppercase\expandafter{\romannumeral2}). Plugging (\ref{eq:new_def}) into (\ref{eq:f_i_opt_MPinverse}), (\ref{eq:opt_f_i_2}) is obtained.
The proof is complete.
\end{proof}

\subsection{Proof of Theorem \ref{thm:converge_essential_cyclic}}
\label{subsec:thm_converge_essential_cyclic}

\begin{proof}
This proof is inspired by Proposition 2.7.1 in \cite{bib:nonlinear_programming_Bertsekas}. To simplify the following exposition, we define $\mathbf{x}\triangleq[\mathbf{x}_1^T,\cdots,\mathbf{x}_{L+1}^T]=[\mathbf{f}_1^T,\cdots,\mathbf{f}_L^T,\mathbf{g}^T]$ and $\mathbf{x}\in\mathcal{X}\triangleq\mathcal{X}_1\times\cdots\times\mathcal{X}_{L+1}$ with $\mathcal{X}_i=\big\{\mathbf{f}_i\in\mathbb{C}^{J_iN_i}\big|\mathbf{f}_i^H\mathbf{E}_i\mathbf{f}_i\!\leq\!P_i\big\}$, for $i=1,\cdots,L$ and $\mathcal{X}_{L+1}=\mathbb{C}^{KM}$. 
For any specific essentially cyclic update BCD algorithm, we assume that it starts from an initial feasible solution 
$\mathbf{x}^{(0)}\triangleq[\mathbf{x}_1^T{}^{(0)},\cdots,\mathbf{x}_{L+1}^T{}^{(0)}]$ and the iteration index $(k)$ increases by one after any block's update.
Denote $\mathbf{x}_{i}^{(k)}$ as the $i$-th block of $\mathbf{x}^{(k)}$ and $\mathbf{x}_{\bar{i}}=[\mathbf{x}_1,\cdots,\mathbf{x}_{i-1},\mathbf{x}_{i+1},\cdots,\mathbf{x}_{L+1}]$, $i\in\{1,\cdots,L+1\}$, $i\in\{1,\cdots,L+1\}$. Assume that $T$ is a period of the essentially cyclic update rule and $\{t_1,\cdots,t_T\}$, with $t_j\in\{1,\cdots,L+1\}$ $\forall j\in\{1,\cdots,T\}$, as the indices of the updated blocks in a period in order. If $\mathbf{x}_{t_j}$ is updated in the $(k)$-th iteration, then $\mathbf{x}_{t_{j\oplus1}}$ is updated in the $(k\!+\!1)$-th iteration. Define $j\!\oplus\!1\triangleq\!j(\!\!\!\!\mod T)\!+\!1$, $\forall j\in\{1,\cdots,T\}$ and $j\!\oplus\!m$ as $j\!\oplus\!1$ by $m$ times. 

By repeatedly invoking Bolzano-Weierstrass theorem to $\mathbf{f}_i$ to $\mathbf{f}_L$ and noticing that $\mathbf{g}$ is updated in closed form by equation (\ref{eq:G_MMSE}), the existence of limit points of $\{\mathbf{x}^{(k)}\}_{k=0}^{\infty}$ can be proved.   

Then we prove that $\mathsf{MSE}^{(k)}$ is decreasing. If $\mathbf{x}_{L+1}$(or $\mathbf{g}$) is updated in the $(k\!+\!1)$-th iteration, then ($\mathsf{P}1$) is solved and thus MSE is decreasing. Assume that in the $(k\!+\!1)$-th iteration, the ($t_{j\oplus1}$)-th block is updated, $t_{j\oplus1}\in\{1,\cdots,L\}$. Then 
\begin{align}
\mathbf{x}_{t_{j\oplus1}}^{(k+1)}\!\!=\!\!\underset{\mathbf{x}_{t_{j\oplus1}}\in\mathcal{X}_{t_{j\oplus1}}}{\arg\min.} \mathsf{MSE}\big(\mathbf{x}_{t_{j\oplus1}}\big|\mathbf{x}_{\overline{t_{j\oplus1}}}^{(k)}\big)\!\!+\!\!\kappa\big\|\mathbf{x}_{t_{j\oplus1}}\!-\!\mathbf{x}_{t_{j\oplus1}}^{(k)}\big\|_2^2. \nonumber
\end{align}
Since $\mathbf{x}_{t_{j}}^{(k)}$ is feasible, it should give no smaller objective than  $\mathbf{x}_{t_{j}}^{(k+1)}$ for the above problem. This implies
\begin{align}
\!\!\mathsf{MSE}\big(\mathbf{x}^{(k+1)}\big)\!\leq\!\mathsf{MSE}\big(\mathbf{x}^{(k)}\big)\!-\!\kappa\big\|\mathbf{x}^{(k)}\!\!-\!\!\mathbf{x}^{(k+1)}\big\|_2^2\!\leq\!\mathsf{MSE}\big(\mathbf{x}^{(k)}\big).\nonumber
\end{align}
Thus $\mathsf{MSE}^{(k)}$ is decreasing. At the same time notice that $\mathsf{MSE}$ should be nonnegative, thus $\mathsf{MSE}^{(k)}$ converges. 

Next we prove that any limit point is stationary. Assume that a subsequence of solution $\mathbf{x}^{(k_j)}$ converges to a limit point $\bar{\mathbf{x}}\triangleq[\bar{\mathbf{x}}_1^T,\cdots,\bar{\mathbf{x}}_{L+1}^T]$. Since there are 
finite blocks, we assume the block $i\in\{1,\cdots,L+1\}$ is updated infinitely many times and assume that $i=t_l$ for some $l\in\{1,\cdots,T\}$. It should be noted that such $l$ may be non-unique and arbitrary one can be chosen to do the job.

We assert that $\mathbf{x}^{(k_j\!+\!1)}\rightarrow\bar{\mathbf{x}}$, i.e. $\mathbf{x}^{(k_j+1)}_{t_{l\oplus1}}\rightarrow\bar{\mathbf{x}}_{t_{l\oplus1}}$.  This claim can be proved in two cases---i) $t_{l\oplus1}\!=\!L\!+\!1$ and ii) $t_{l\oplus1}\in\{1,\cdots,L\}$ .

i) $t_{l\oplus1}\!=\!L\!+\!1$. Notice that $\mathbf{x}_{L+1}=\mathbf{g}$ is updated in a closed form (\ref{eq:G_MMSE}), which is a continuous function of $[\mathbf{x}_1^T,\cdots,\mathbf{x}_L^T]$. Since 
$\mathbf{x}_{\overline{L\!+\!1}}^{(k_j)}$ converges, by taking $j\rightarrow\infty$, $\mathbf{x}_{t_{l\oplus1}}^{(k_j+1)}$ should converge to some limit, i.e. $\mathbf{x}_{t_{l\oplus1}}^{(k_j+1)}\rightarrow\widetilde{\mathbf{x}}_{L+1}$. Notice that $\mathsf{MSE}^{(k)}$ converges, so $\mathsf{MSE}\big(\bar{\mathbf{x}}_{\overline{L+1}},\bar{\mathbf{x}}_{L+1}\big)=\mathsf{MSE}\big(\bar{\mathbf{x}}_{\overline{L+1}},\widetilde{\mathbf{x}}_{L+1}\big)$. This means both $\bar{\mathbf{x}}_{L+1}$ and $\widetilde{\mathbf{x}}_{L+1}$ are solutions to the problem ($\mathsf{P}1$) with sensors' beamformers $[\bar{\mathbf{x}}_1^T,\cdots,\bar{\mathbf{x}}_L^T]$ given. Since ($\mathsf{P}1$) is strictly convex and thus has unique solution, we conclude $\widetilde{\mathbf{x}}_{L+1}=\bar{\mathbf{x}}_{L+1}$. So 
$\mathbf{x}^{(k_j\!+\!1)}_{t_{l\oplus1}}\rightarrow\bar{\mathbf{x}}_{t_{l\oplus1}}$ holds for the case $t_{l\oplus1}=L+1$.

ii) $t_{l\oplus1}\in\{1,\cdots,L\}$. By contradiction, we assume that $\mathbf{x}^{(k_j+1)}_{t_{l\oplus1}}$ does not converge to $\bar{\mathbf{x}}_{t_{l\oplus1}}$.
By denoting $\gamma^{(k_j)}\triangleq\|\mathbf{x}^{(k_j+1)}_{t_{l\oplus1}}\!-\!\bar{\mathbf{x}}_{t_{l\oplus1}}\|_2$ and possibly restricting to a subsequence, we assume that there exists a $\bar{\gamma}>0$ such that $\gamma^{(k_j)}\geq\bar{\gamma}$ for all $j$. Let $\mathbf{s}^{(k_j)}_{l}\!=\!(\mathbf{x}^{(k_j+1)}_{t_{l\oplus1}}\!-\!\mathbf{x}_{t_{l\oplus1}}^{(k_j)})/\gamma^{(k_j)}$. Since $\mathbf{s}^{(k_j)}$ is bounded, by Bolzano-Weierstrass theorem and restricting to a subsequence, we assume that $\mathbf{s}^{(k_j)}\rightarrow\bar{\mathbf{s}}$. Then we obtain
\begin{align}
\!\!\!\!&\mathsf{MSE}\big(\mathbf{x}^{(k_j+1)}\big)\!=\!\mathsf{MSE}\big(\mathbf{x}^{(k_j+1)}_{t_{l\oplus1}}\big|\mathbf{x}^{(k_j)}_{\overline{t_{l\oplus1}}}\big) \\
\!\!\!\!\leq& \mathsf{MSE}\big(\mathbf{x}^{(k_j+1)}_{t_{l\oplus1}}\big|\mathbf{x}^{(k_j)}_{\overline{t_{l\oplus1}}}\big)\!+\!\kappa\big\|\mathbf{x}_{t_{l\oplus1}}^{(k_j+1)}\!-\!\mathbf{x}_{t_{l\oplus1}}^{(k_j)}\big\|_2^2\\
\!\!\!\!=&\mathsf{MSE}\big(\mathbf{x}^{(k_j)}_{t_{l\oplus1}}\!+\!\gamma^{(k_j)}\mathbf{s}^{(k_j)}\big|\mathbf{x}^{(k_j)}_{\overline{t_{l\oplus1}}}\big)\!+\!\kappa\big\|\gamma^{(k_j)}\mathbf{s}^{(k_j)}\big\|_2^2\\
\!\!\!\!\leq&\mathsf{MSE}\big(\mathbf{x}^{(k_j)}_{t_{l\oplus1}}\!+\!\epsilon\bar{\gamma}\mathbf{s}^{(k_j)}\big|\mathbf{x}^{(k_j)}_{\overline{t_{l\oplus1}}}\big)\!+\!\kappa\big\|\epsilon\bar{\gamma}\mathbf{s}^{(k_j)}\big\|_2^2,\forall\epsilon\!\in\![0,1] \label{eq:proof_ineq_1}\\
\!\!\!\!\leq&\mathsf{MSE}\big(\mathbf{x}^{(k_j)}_{t_{l\oplus1}}\big|\mathbf{x}^{(k_j)}_{\overline{t_{l\oplus1}}}\big)=\mathsf{MSE}\big(\mathbf{x}^{(k_j)}\big),\label{eq:proof_ineq_2}
\end{align}
where the last two inequalities follow the fact that $\mathsf{MSE}\big(\mathbf{x}_{t_{l\oplus1}}\big|\mathbf{x}^{(k_j)}_{\overline{t_{l\oplus1}}}\big)\!+\!\kappa\|\mathbf{x}_{t_{l\oplus1}}\!\!-\!\!\mathbf{x}_{t_{l\oplus1}}^{(k_j)}\|_2^2$ is strictly convex and attains the minimum at point $\mathbf{x}^{(k_j+1)}_{t_{l\oplus1}}$. Noting $\mathsf{MSE}^{(k_j)}$ converges and letting $j\rightarrow\infty$, we obtain
\begin{align}
\!\!\!\!\!\!\mathsf{MSE}\big(\bar{\mathbf{x}}\big)&\leq\mathsf{MSE}\big(\bar{\mathbf{x}}_{t_{l\oplus1}}\!\!+\!\epsilon\bar{\gamma}\bar{\mathbf{s}}\big|\bar{\mathbf{x}}_{\overline{t_{l\oplus1}}}\big)\!+\!\kappa\epsilon^2\bar{\gamma}^2\nonumber\\
&\leq\mathsf{MSE}\big(\bar{\mathbf{x}}\big), \quad \forall\epsilon\in[0,1],
\end{align}
which immediately implies 
\begin{align}
\!\!\!\!\!\!\!\!\mathsf{MSE}\big(\bar{\mathbf{x}}_{t_{l\oplus1}}\!\!+\!\epsilon\bar{\gamma}\bar{\mathbf{s}}\big|\bar{\mathbf{x}}_{\overline{t_{l\oplus1}}}\big)\!+\!\kappa\epsilon^2\bar{\gamma}^2\!=\!\mathsf{MSE}\big(\bar{\mathbf{x}}\big),\forall\epsilon\in[0,1].\label{eq:proof_strict_convex_quadratic}
\end{align}
However the above is impossible. Notice that $\mathsf{MSE}\big(\bar{\mathbf{x}}_{t_{l\oplus1}}\!\!+\!\!\epsilon\bar{\gamma}\bar{\mathbf{s}}\big|\bar{\mathbf{x}}_{\overline{t_{l\oplus1}}}\big)$ is a quadratic function of $\epsilon$ with nonnegative quadratic coefficient and  $\bar{\gamma},\kappa>0$. Thus the left hand side(LHS) of equation (\ref{eq:proof_strict_convex_quadratic}) is a strictly convex quadratic function of $\epsilon$, which has at most two different $\epsilon$ giving the function value of $\mathsf{MSE}(\bar{\mathbf{x}})$. Contradiction has been reached. 

In the above we have proved that $\mathbf{x}^{(k_j+1)}\rightarrow\bar{\mathbf{x}}$. Next we show that $\mathbf{\nabla}_{\mathbf{x}_{t_{l\oplus1}}}\mathsf{MSE}\big(\bar{\mathbf{x}}\big)^T\big(\mathbf{x}_{t_{l\oplus1}}\!-\!\bar{\mathbf{x}}_{t_{l\oplus1}}\big)\geq0$, $\forall\mathbf{x}_{t_{l\oplus1}}\in\mathcal{X}_{t_{l\oplus1}}$, which is also proved in two cases:

When $t_{l\oplus1}\in\{1,\cdots,L\}$, we have
\begin{align}
\!\!\!\mathbf{x}^{(k_j\!+\!1)}_{t_{l\oplus\!1}}\!\!=\!\!\underset{\mathbf{x}_{t_{l\oplus1}}\!\in\!\mathcal{X}_{t_{l\oplus1}}}{\arg\min.}\mathsf{MSE}\Big(\mathbf{x}_{t_{l\!\oplus\!1}}\big|\mathbf{x}^{(k_j)}_{\overline{t_{l\oplus1}}}\Big)\!+\!\kappa\big\|\mathbf{x}_{t_{l\oplus1}}\!-\!\mathbf{x}_{t_{l\oplus1}}^{(k_j)}\!\big\|_2^2. \nonumber
\end{align}
By optimality condition, the above implies
\begin{align}
&\mathbf{\nabla}_{\mathbf{x}_{t_{l\oplus1}}}\mathsf{MSE}\big(\mathbf{x}^{(k_j+1)}_{t_{l\oplus1}}\big|\mathbf{x}^{(k_j)}_{\overline{t_{l\oplus1}}}\big)^T\big(\mathbf{x}_{t_{l\oplus1}}\!-\!\mathbf{x}^{(k_j+1)}_{t_{l\oplus1}}\big), \\
&\!+\!2\kappa\big(\mathbf{x}_{t_{l\oplus1}}^{(k_j+1)}\!-\!\mathbf{x}_{t_{l\oplus1}}^{(k_j)}\big)^T\big(\mathbf{x}_{t_{l\oplus1}}\!-\!\mathbf{x}^{(k_j+1)}_{t_{l\oplus1}}\big)\geq0, \forall\mathbf{x}_{t_{l\oplus1}}\in\mathcal{X}_{t_{l\oplus1}}. \nonumber
\end{align}
Let $j\rightarrow\infty$ in the above equation and note that $\mathsf{MSE}$ is continuously differentiable, we obtain
\begin{align}
\mathbf{\nabla}_{\mathbf{x}_{t_{l\oplus1}}}\mathsf{MSE}\big(\bar{\mathbf{x}}\big)^T\big(\mathbf{x}_{t_{l\oplus1}}\!-\!\bar{\mathbf{x}}_{t_{l\oplus1}}\big)\geq0, \forall\mathbf{x}_{t_{l\oplus1}}\in\mathcal{X}_{t_{l\oplus1}}. 
\end{align}

When $t_{l\oplus1}=L+1$, the above reasoning still works except that the proximal term is absent (i.e. $\kappa=0$). So we also obtain $\mathbf{\nabla}_{\mathbf{x}_{t_{l\oplus1}}}\mathsf{MSE}\big(\bar{\mathbf{x}}\big)^T\big(\mathbf{x}_{t_{l\oplus1}}\!-\!\bar{\mathbf{x}}_{t_{l\oplus1}}\big)\geq0$.
%

Now replace the subsequence $\{k_j\}$ with $\{k_j+1\}$, $t_{l\oplus1}$ with $t_{l\oplus2}$ and utilize the verbatim argument as above, we can prove  
\begin{align}
\mathbf{\nabla}_{\mathbf{x}_{t_{l\oplus2}}}\mathsf{MSE}\big(\bar{\mathbf{x}}\big)^T\big(\mathbf{x}_{t_{l\oplus2}}\!-\!\bar{\mathbf{x}}_{t_{l\oplus2}}\big)\geq0, \forall\mathbf{x}_{t_{l\oplus2}}\in\mathcal{X}_{t_{l\oplus2}}. 
\end{align}
Repeating this argument for $(T-1)$ times and recalling that for essentially cyclic update rule, $\{t_{l\oplus1},\cdots,t_{l\oplus T}\}\!=\!\{1,\cdots,L\}$, we have proved that 
\begin{align}
\!\!\!\mathbf{\nabla}_{\mathbf{x}_{i}}\!\mathsf{MSE}\big(\bar{\mathbf{x}}\big)^T\!\big(\mathbf{x}_{i}\!-\!\bar{\mathbf{x}}_{i}\big)\!\geq\!0, \forall\mathbf{x}_{i}\in\mathcal{X}_{i},\forall i\!\in\!\{\!1,\!\cdots\!,\!L\!\!+\!\!1\!\}.   
\end{align}
Summing up the above ($L\!+\!1$) inequalities, we obtain 
\begin{align}
\!\!\!\mathbf{\nabla}_{\mathbf{x}}\mathsf{MSE}\big(\bar{\mathbf{x}}\big)^T\!\big(\mathbf{x}\!-\!\bar{\mathbf{x}}\big)\!\geq\!0, \forall\mathbf{x}\in\mathcal{X}.   
\end{align}
So $\bar{\mathbf{x}}$ is actually a stationary point of ($\mathsf{P}0$).
\end{proof}

\end{document}